\documentclass[journal,onecolumn]{IEEEtran}

\usepackage{graphicx}
\usepackage{amsmath}
\usepackage{amssymb}
\usepackage{amsthm}  
\usepackage{mathtools}
\usepackage{booktabs}
\usepackage{array}
\usepackage{multirow}
\usepackage{cite}
\usepackage{makecell}
\usepackage{enumitem}
\usepackage{algorithm}
\usepackage{algpseudocode}
\usepackage{xcolor}
\usepackage{hyperref}
\usepackage{tikz}
\usetikzlibrary{arrows.meta,positioning,shapes.geometric,calc}
\usepackage{pgfplots}
\pgfplotsset{compat=1.17}

\newtheorem{theorem}{Theorem}
\newtheorem{lemma}[theorem]{Lemma}

\newtheorem{proposition}[theorem]{Proposition}
\newtheorem{conjecture}[theorem]{Conjecture}

\theoremstyle{definition}

\newtheorem{construction}[theorem]{Construction}
\newtheorem{example}[theorem]{Example}
\theoremstyle{remark}
\newtheorem{remark}[theorem]{Remark}

\newcommand{\F}{\mathbb{F}}
\newcommand{\Z}{\mathbb{Z}}
\newcommand{\supp}{\mathrm{supp}}
\newcommand{\rank}{\mathrm{rank}}
\newcommand{\wt}{\mathrm{wt}}
\newcommand{\row}{\mathrm{row}}
\newcommand{\invol}[1]{\bar{#1}}
\newcommand{\sinner}[2]{\langle #1,\,#2\rangle_{s}}

\begin{document}
\title{Quantum Bicycle LDPC Codes with High $kd^2/n$ from
Divisor-Driven Search}
\author{Liangdong~Lu,
          Guanmin~Guo, Yang~Liu,
        and~Ruipan~Yang
\thanks{L. Lu, R. Yang, Y. Liu, and G. Guo are with the Department of Basic Science, Air Force Engineering University, Xi'an, Shaanxi,  P. R. China e-mail: (see $kelinglv@163.com,yangruipan@aliyun.com, liu\_yang1@163.com, gmguo\_xjtukgd@yeah.net$).}
\thanks{Manuscript received ; revised  ~ ~, 2026.}}
\markboth{IEEE Transactions on Information Theory,~Vol.~, No.~, ~}%
{Lu \MakeLowercase{\textit{et al.}}: Quantum Bicycle LDPC Codes with High $kd^2/n$}

\maketitle

\begin{abstract}
Bicycle (two-block circulant) quantum low-density parity-check (LDPC)
codes include some of the best known small quantum codes, yet their
design has relied on group-algebra formulations in which the dimension
and distance are accessible only through matrix computation. We show
that in the cyclic case the construction collapses into the polynomial
ring $\F_2[x]/(x^{l}-1)$: self-orthogonality is automatic, the quantum
dimension is read off from a polynomial gcd, and the minimum distance
is certified exactly through the Calderbank correspondence to additive
codes over $\F_4$, turning code search into an algebraically
pre-filtered enumeration that reaches parameter regimes poorly covered
by existing tables. A computer search based on this framework recovers the short codes
$[[42,12,4]]_2$ and $[[62,12,4]]_2$ and produces a family of codes with
competitive figure of merit $kd^2/n$, including $[[66,20,7]]_2$ with
$kd^2/n=14.85$, above the bivariate bicycle code $[[144,12,12]]_2$
($kd^2/n=12$) at less than half the block length, together with
$[[46,2,8]]_2$, $[[66,2,9]]_2$, $[[66,4,8]]_2$, $[[66,6,8]]_2$ and, at
$n=90$, $[[90,16,6]]_2$, $[[90,18,6]]_2$, $[[90,20,6]]_2$. An
exhaustive census at $n=48$ delineates the boundary of this picture:
we exhibit a $[[48,10,6]]_2$ code from a minimal $48$-element group
(the Aydin--Tamo--Barg realization uses $72$ elements), and prove that
distance $5$ forces a stabilizer-rank loss, which excludes
$[[48,10,5]]_2$ from the weight-$8$ symmetric coset family. The
framework thus opens a systematic route to bicycle-type quantum LDPC
codes beyond the reach of group-theoretic searches, and identifies
exactly where genuinely coset-theoretic phenomena begin.
\end{abstract}

\begin{IEEEkeywords}
Quantum error correction, quantum LDPC codes, bicycle codes, stabilizer codes,
cyclic codes, additive codes over GF(4), Calderbank correspondence,
divisor-driven search.
\end{IEEEkeywords}

\section{Introduction}\label{sec:intro}
\subsection{From surface codes to bicycle codes}
\IEEEPARstart{T}{he} cost of a quantum error-correcting code is ultimately
measured by the overhead it adds to a fault-tolerant computation. Surface
codes~\cite{Kitaev03} remain the reference point: they are geometrically
local and tolerate high error rates, but each patch encodes a single logical
qubit, so the physical-qubit count grows quadratically with the target
distance. Quantum low-density parity-check (LDPC) codes aim to remove this
overhead by keeping every stabilizer measurement at bounded weight while
letting the dimension and distance grow with the block length.

The search for such codes has moved through three methodological phases. The
first was product-based. The hypergraph product codes of Tillich and
Z\'emor~\cite{TZ14} combine two classical codes into a quantum CSS code with
constant rate and distance $\Theta(\sqrt{n})$, and the hyperbicycle codes of
Kovalev and Pryadko~\cite{KP13b} showed that circulant structure is
compatible with finite rate in the same distance regime. The second phase
was asymptotic. Panteleev and Kalachev~\cite{PK22} constructed quantum LDPC
codes with almost linear distance, and the quantum Tanner codes of
Leverrier and Z\'emor~\cite{LZ22}, together with a linear-time
decoder~\cite{LZD22}, established asymptotically good families that decode
efficiently. The third phase, to which the present work belongs, is
group-theoretic. Codes built from two commuting blocks go back to the
sparse-graph constructions of MacKay, Mitchison, and McFadden~\cite{MMM04}
and to the generalized bicycle (GB) codes of Kovalev and
Pryadko~\cite{KP13a}, in which both blocks are circulants. Lin and Pryadko
extended the construction to arbitrary group algebras, giving the two-block
group algebra (2BGA) codes~\cite{2BGA}, and Wang, Lin, and Pryadko analyzed
the underlying commutation mechanism for abelian and non-abelian
groups~\cite{WLP23}. The bivariate bicycle (BB) codes of Bravyi
\emph{et al.}~\cite{BB24} then showed that this line of research has
immediate practical content: the $[[144,12,12]]$ code protects twelve
logical qubits at distance twelve with weight-six checks, reducing the
qubit overhead by roughly an order of magnitude relative to surface codes
of comparable performance, on a layout compatible with superconducting
hardware.

Since the appearance of BB codes, the bicycle family has developed along
four lines. The first is algebraic refinement of the univariate case: small
codes from algebraic extensions of GB codes~\cite{GBExt}, GB codes with
connectivity approaching that of surface codes~\cite{GBLC}, a
classification of the lowest-weight $(2,2)$-GB family~\cite{GB22},
univariate bicycle codes that reduce the two-polynomial search to a single
polynomial~\cite{UB26}, and a cyclic-submodule formulation that exposes the
automorphism group and its fault-tolerant gates~\cite{GBauto}. The second
is multivariate generalization, from multivariate bicycle codes~\cite{MB24}
and independent trivariate bicycle codes~\cite{ITB} to the multivariate
multicycle framework, which unifies these constructions and supports
single-shot decoding~\cite{MM26}; cyclic hypergraph product codes combine
the circulant idea with the older product construction~\cite{CHGP}. The
third concerns logical structure and boundaries: explicit bases of logical
operators and fold-transversal gates~\cite{ES24}, self-dual BB codes with
transversal Clifford gates~\cite{LC25}, and open-boundary planar variants
obtained by pruning~\cite{Prune} or by anyon condensation and lattice
grafting~\cite{Planar}. The fourth is decoding and implementation. BP with
ordered statistics decoding (BP-OSD)~\cite{PK21} remains the standard
benchmark, now joined by almost-linear-time decoders under circuit-level
noise~\cite{BPOTF}, matching decoders that exploit the toric structure of
BB codes~\cite{SWB26, Tan26}, and list decoding of bicycle codes~\cite{ListDec}.
On the hardware side, the locality obstruction bounds of
\cite{Bounds21} have been answered by two-dimensional local
protocols~\cite{Local2D}, morphing circuits that lower the required
connectivity~\cite{Morphing}, multilayer placement and
routing~\cite{HAL}, erasure-biased neutral-atom processors~\cite{Erasure},
and modular architectures~\cite{Modular}.

A second pillar of the present work predates all of the above. The
stabilizer formalism of Calderbank, Rains, Shor, and Sloane
(CRSS)~\cite{CRSS98} provides an algebraic bridge from classical coding
theory to quantum codes: the error group modulo phases is an elementary
abelian $2$-group, operator commutation is captured by a symplectic inner
product, and commuting stabilizer groups correspond to symplectic
self-orthogonal codes, equivalently described as additive self-orthogonal
codes over $\F_4$. This symplectic viewpoint has remained largely disjoint
from the group-algebraic bicycle literature surveyed above; connecting the
two is the starting point of this paper.

\subsection{The problem we address}
The group-theoretic constructions (GB, 2BGA, BB, and the coset-based codes
of Aydin--Tamo--Barg~\cite{ATB}) are powerful, but they are phrased in the
language of group actions, permutation representations, and group algebras.
For code discovery this has two practical consequences. The search space is
indexed by group--subgroup pairs and support sets of group-algebra
elements, which is indirect and hard to prune; and the key quantum
parameters, the dimension $k$ and the distance $d$, cannot be read off
algebraically before large parity-check matrices are built. We therefore
ask: \emph{can the bicycle construction be reformulated entirely in the
symplectic/polynomial domain, so that good quantum LDPC codes are obtained
by analyzing and selecting polynomials, with the dimension and the
self-orthogonality controlled algebraically before any matrix is built?}

\subsection{Contributions}
\begin{enumerate}[leftmargin=2em,label=(\arabic*)]
  \item A polynomial framework with algebraic dimension control. Under the
  cyclic specialization $G=\Z_l$, $H=\{e\}$, circulant blocks act as
  multiplication in $R=\F_2[x]/(x^l-1)$, so the CRSS self-orthogonality
  holds automatically (Theorem~\ref{thm:weld}), and the quantum dimension
  is a polynomial gcd, $k=2\deg\gcd(a,b,x^l-1)$
  (Proposition~\ref{prop:k}), with a two-sided dimension window for the
  lift construction (Theorem~\ref{thm:window}): the dimension is fixed by
  $\deg g$ before any matrix is built.
  \item An exact-distance search that finds high-ratio codes. The quantum
  distance is computed exactly as a two-kernel minimum with the stabilizer
  excluded (Lemma~\ref{lem:exactd}), inside a pipeline that applies three
  polynomial-level filters before building any matrix
  (Construction~\ref{cons:search}, Algorithm~\ref{alg:search}). A Magma
  implementation finds, among others, $[[46,2,8]]_2$, $[[66,2,9]]_2$,
  $[[66,4,8]]_2$, $[[66,6,8]]_2$, $[[66,20,7]]_2$, $[[90,16,6]]_2$,
  $[[90,18,6]]_2$, and $[[90,20,6]]_2$, and recovers the short codes
  $[[42,12,4]]_2$ and $[[62,12,4]]_2$; the $[[66,20,7]]_2$ code attains
  $kd^2/n=14.85$, above the bivariate bicycle code $[[144,12,12]]_2$
  ($kd^2/n=12$) and all trivariate bicycle codes of \cite{ITB}
  (Section~\ref{sec:results}, Tables~\ref{tab:codes}--\ref{tab:explicit}).
  \item A boundary theorem beyond the cyclic case. For the weight-$8$
  symmetric coset-2BGA family at $n=48$, distance $5$ forces a
  stabilizer-rank loss, excluding $[[48,10,5]]_2$ from the family entirely
  (Theorem~\ref{thm:rank-degeneracy}); we also give a $[[48,10,6]]_2$ code
  realized by a minimal $48$-element group (Theorem~\ref{thm:minimal}).
  This identifies precisely which phenomena the polynomial framework
  captures and which are genuinely coset-theoretic
  (Section~\ref{sec:rank48}).
\end{enumerate}

\subsection{Organization}
Section~\ref{sec:prelim} fixes notation and recalls the CRSS and bicycle
constructions. Section~\ref{sec:poly} develops the symplectic polynomial
reformulation. Section~\ref{sec:method} presents the search algorithm.
Section~\ref{sec:results} reports the codes found, and
Section~\ref{sec:decode} benchmarks their BP-OSD decoding performance
against BB reference codes. Section~\ref{sec:rank48} steps outside the
cyclic case and locates the boundary of the polynomial framework through
the $n=48$ coset-2BGA family (minimal group realizations and rank
degeneracy). Section~\ref{sec:value} discusses significance and
limitations, and Section~\ref{sec:concl} concludes.

\section{Preliminaries}\label{sec:prelim}
\subsection{Stabilizer codes and the symplectic inner product}
Let $n$ be the number of qubits. The error group $E_n$ modulo phases is
isomorphic to $\F_2^{2n}$ via $i^{\lambda}X(a)Z(b)\mapsto(a\mid b)$. The
standard symplectic inner product on $\F_2^{2n}$ is
\[
\sinner{(a\mid b)}{(a'\mid b')} = a\cdot b' + a'\cdot b .
\]
Two Pauli operators commute iff their images are symplectically orthogonal.
A subspace $C\le\F_2^{2n}$ is \emph{symplectic self-orthogonal} if
$C\subseteq C_s^{\perp}$, where
$C_s^{\perp}=\{v:\sinner{v}{c}=0,\ \forall c\in C\}$. A stabilizer code with
$C$ self-orthogonal has parameters
\[
[[n,\ k,\ d]]=[[\,n,\ n-\dim C,\ \min\wt_{s}(C_s^{\perp}\setminus C)\,]],
\]
where $\wt_s$ counts coordinates with $(a_i,b_i)\ne(0,0)$.

\subsection{The Calderbank correspondence}
Map $\F_2^{2n}\to\F_4^{n}$ by $\phi(a\mid b)=a+\omega b$, where
$\F_4=\{0,1,\omega,\omega^2\}$, $\omega^2=\omega+1$. Then $C$ is symplectic
self-orthogonal over $\F_2$ iff $\phi(C)$ is additive self-orthogonal over
$\F_4$ under the trace-symplectic form, and the quantum distance is the
minimum weight of $\phi(C_s^{\perp})\setminus\phi(C)$
\cite{CRSS98}. This is how we compute $d$ exactly; excluding the stabilizer
image is essential (Section~\ref{sec:method}).

\subsection{Bicycle (two-block circulant) codes}
A bicycle code is a CSS code with
\[
H_X=[\,A\mid B\,],\qquad H_Z=[\,B^{T}\mid A^{T}\,],
\]
where $A,B$ are $l\times l$ circulant matrices over $\F_2$. This is the
cyclic case of the generalized bicycle codes of \cite{KP13a} and of the
two-block group algebra codes of \cite{2BGA}. Since circulants commute,
$H_XH_Z^{T}=AB+BA=0$ in characteristic $2$, so the CSS orthogonality holds
automatically; the code length is $n=2l$ and the stabilizer row weight is
bounded by $\wt(a)+\wt(b)$, giving the LDPC property when
$\wt(a),\wt(b)=O(1)$.

\section{Symplectic Polynomial Reformulation}\label{sec:poly}
\subsection{From group algebra to quotient ring}
For $G=\Z_l=\langle t\rangle$, $H=\{e\}$, the group algebra is
$\F_2[\Z_l]\cong R:=\F_2[x]/(x^l-1)$. The correspondence is
\begin{center}
\begin{tabular}{@{}ll@{}}
\toprule
Group language & Polynomial language \\
\midrule
group element $t^{i}$ & monomial $x^{i}$ \\
element $a=\sum_i\alpha_i t^{i}$ & $a(x)=\sum_i\alpha_i x^{i}$ \\
left shift $L(t^{i})$ & multiplication by $x^{i}$ \\
permutation matrix $L(a)$ & circulant $\mathrm{circ}(a(x))$ \\
transpose $L(a)^{T}$ & involution $\invol{a}(x)=a(x^{-1})$ \\
weight $|\supp(a)|$ & number of nonzero terms of $a(x)$ \\
\bottomrule
\end{tabular}
\end{center}
Hence $A=\mathrm{circ}(a(x))$, $B=\mathrm{circ}(b(x))$, and
$H_Z=[\,\mathrm{circ}(\invol{b})\mid\mathrm{circ}(\invol{a})\,]$.

The involution interacts with the symplectic form through the following
identity, which will also account for an additional symmetry observed in our
search results (Remark~\ref{rem:cheaporth}).

\begin{lemma}\label{lem:reciprocity}
For every $a(x),b(x)\in R=\F_2[x]/(x^l-1)$,
\[
a(x)\,\invol{b}(x)\;=\;\overline{\,b(x)\,\invol{a}(x)\,},
\]
and consequently $a\invol{b}+b\invol{a}$ is a sum of pairs
$x^{t}+x^{-t}$, i.e.\ it is fixed by the involution term by term.
\end{lemma}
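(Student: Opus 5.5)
The plan is to show first that the involution $f(x)\mapsto \invol{f}(x)=f(x^{-1})$ is a well-defined ring automorphism of $R$ of order two, and then to read off both claims from this.

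\textbf{Well-definedness and the first identity.} In $R$ we have $x^{l}=1$, so $x$ is a unit with $x^{-1}=x^{l-1}$. Hence the substitution $x\mapsto x^{l-1}$ induces an $\F_2$-algebra endomorphism of $\F_2[x]$. This endomorphism sends $x^{l}-1$ to $x^{l(l-1)}-1$, which is divisible by $x^{l}-1$. It therefore descends to an endomorphism $f\mapsto\invol{f}$ of $R$. Applying it twice sends $x$ to $x^{(l-1)^2}=x$ in $R$, so it is an involutive automorphism; in particular it is multiplicative. Because $R$ is commutative, we get
\[
\overline{b\,\invol{a}}=\invol{b}\,\invol{\invol{a}}=\invol{b}\,a=a\,\invol{b},
\]
which is the first claim.

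\textbf{Term-by-term symmetry.} Set $c=a\invol{b}$. By the identity just proved, $a\invol{b}+b\invol{a}=c+\invol{c}$. Write $c=\sum_{t\in\Z_l}\gamma_t x^{t}$, so that $\invol{c}=\sum_t\gamma_t x^{-t}$. Then
\[
c+\invol{c}=\sum_t\gamma_t\,(x^{t}+x^{-t}).
\]
This exhibits $c+\invol{c}$ as a sum of pairs $x^{t}+x^{-t}$, and each pair is visibly fixed by the involution.

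\textbf{Degenerate pairs.} The only point needing care is the reading of ``pairs'' when $t\equiv -t\pmod l$, that is, $t=0$ or $t=l/2$ for even $l$. In those cases $x^{t}+x^{-t}=2x^{t}=0$ in characteristic $2$, so such terms drop out. The surviving terms group into genuine pairs $\{t,-t\}$ with $t\ne -t$, and the coefficient of $x^{t}$ equals that of $x^{-t}$, namely $\gamma_t+\gamma_{-t}$. So the sum is fixed by the involution term by term, as claimed. I expect no real obstacle beyond stating this degenerate case explicitly.
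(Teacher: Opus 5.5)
Your proof is correct and is essentially the paper's argument: both rest on $f\mapsto\invol{f}$ being an involutive ring automorphism of $R$, which together with commutativity gives $\overline{b\,\invol{a}}=\invol{b}\,a=a\,\invol{b}$. You go slightly further than the paper in two places, which it leaves implicit: you check that the substitution $x\mapsto x^{l-1}$ is well defined on $R$, and you write out the ``consequently'' part as $c+\invol{c}$ with $c=a\invol{b}$, including the vanishing of the self-paired terms $t\equiv -t \pmod l$.
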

\begin{proof}
The involution $p\mapsto\invol p$ is a ring automorphism of $R$ and an
involution, so
$\overline{b\invol{a}}=\invol{b}\,\overline{\invol{a}}
 =\invol{b}\,a=a\invol{b}$.
\end{proof}

\subsection{Self-orthogonality becomes automatic}
\begin{theorem}\label{thm:weld}
With $C=\row\binom{H_X\ 0}{0\ H_Z}\subset\F_2^{4l}$, the following are
equivalent: (i) $C$ symplectic self-orthogonal; (ii) $H_XH_Z^{T}=0$;
(iii) $AB=BA$; (iv) the polynomial identity
\begin{equation}\label{eq:invol}
a(x)\,b(x)+b(x)\,a(x)\equiv 0 \pmod{x^{l}-1}.
\end{equation}
Since $R$ is commutative, \eqref{eq:invol} holds identically, so the bicycle
structure yields symplectic self-orthogonal codes \emph{for free}.
\end{theorem}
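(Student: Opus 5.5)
I will prove the chain of equivalences (i)$\Leftrightarrow$(ii)$\Leftrightarrow$(iii)$\Leftrightarrow$(iv) one link at a time. Then I observe that (iv) is a tautology in the commutative ring $R$.

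For (i)$\Leftrightarrow$(ii), write the generic rows of $C$ as $(u H_X\mid 0)$ and $(0\mid vH_Z)$ with $u,v\in\F_2^{l}$. The symplectic form from the preliminaries vanishes automatically on two $X$-type rows, since both $Z$-parts are zero. It likewise vanishes on two $Z$-type rows. On a mixed pair it equals $uH_X\cdot vH_Z=u\,(H_XH_Z^{T})\,v^{T}$. Because $C$ is spanned by these rows and the form is bilinear, $C\subseteq C_s^{\perp}$ holds iff $u H_XH_Z^{T}v^{T}=0$ for all $u,v$, i.e.\ iff $H_XH_Z^{T}=0$.

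For (ii)$\Leftrightarrow$(iii), I compute directly from $H_X=[A\mid B]$ and $H_Z=[B^{T}\mid A^{T}]$:
\[
H_XH_Z^{T}=A(B^{T})^{T}+B(A^{T})^{T}=AB+BA .
\]
In characteristic $2$ this vanishes iff $AB=BA$.

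For (iii)$\Leftrightarrow$(iv), I use the dictionary of the previous subsection. The map $p(x)\mapsto\mathrm{circ}(p(x))$ is an injective ring homomorphism from $R$ into $l\times l$ matrices over $\F_2$, because multiplication by $x$ corresponds to the cyclic shift. Hence $AB+BA=\mathrm{circ}(ab+ba)$, and by injectivity this is zero iff $ab+ba\equiv 0\pmod{x^{l}-1}$. Finally, $R$ is commutative, so $ab=ba$ and $ab+ba=2ab=0$. Thus (iv) holds for every choice of $a,b$, and therefore so do (i)--(iii).

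The only step needing care is the bookkeeping of transposes in (ii). I must check that the matrix appearing in $H_XH_Z^{T}$ is really $A B+BA$ and not $AB^{T}+BA^{T}$. This depends on the convention $H_Z=[B^{T}\mid A^{T}]$, which makes the transposes cancel. I would also note briefly, via Lemma~\ref{lem:reciprocity}, that the involution $\invol{\cdot}$ on $H_Z$ does not affect this calculation, since the transposes cancel before any involution enters.
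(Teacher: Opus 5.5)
Your proposal is correct and follows the paper's proof link for link. The generator-row computation you give for (i)$\Leftrightarrow$(ii) is exactly the content of the paper's Gram-matrix identity for $\mathcal G\Omega\mathcal G^{T}$; its only nonzero blocks $H_XH_Z^{T}$ and $H_ZH_X^{T}$ actually sit off the diagonal, so your row-by-row version is the cleaner statement. The steps (ii)$\Leftrightarrow$(iii)$\Leftrightarrow$(iv) proceed identically via $H_XH_Z^{T}=AB+BA=\mathrm{circ}(ab+ba)$ and commutativity of $R$, and your explicit appeal to injectivity of $p\mapsto\mathrm{circ}(p)$ makes precise a step the paper leaves implicit.
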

\begin{proof}
(i)$\Leftrightarrow$(ii) from
$\mathcal G\Omega\mathcal G^{T}=\mathrm{diag}(H_XH_Z^{T},H_ZH_X^{T})$.
(ii)$\Leftrightarrow$(iii): $H_XH_Z^{T}=AB+BA$, and char $2$ gives
$H_XH_Z^{T}=0\iff AB=BA$.
(iii)$\Leftrightarrow$(iv): multiplication of circulants corresponds to
polynomial multiplication, so
$AB+BA=\mathrm{circ}(ab+ba)$.
Commutativity of $R$ gives $ab+ba=2ab=0$ in char $2$.
\end{proof}

\begin{remark}\label{rem:cheaporth}
Theorem~\ref{thm:weld} turns commutation checking inside the search loop
from $2l^{2}$ row-pair inner products into \emph{one} polynomial identity
\eqref{eq:invol}, costing one cyclic convolution, $O(l^{2})$ word operations
or $O(l\log l)$ by FFT-style methods, independent of the matrix dimensions;
in the commutative case the identity is void, so self-orthogonality is
certified by the polynomial formalism itself. In our Magma implementation
the orthogonality gate in fact tests the strictly stronger \emph{involution
identity} $a\invol{b}+b\invol{a}\equiv0$, equivalently $AB^{T}+BA^{T}=0$
(i.e.\ $AB^{T}$ symmetric), a condition that random low-weight polynomial
pairs violate with overwhelming probability. Every code in
Tables~\ref{tab:codes}--\ref{tab:newk2} therefore enjoys this additional
symmetry; a structural characterization of the resulting subclass of
bicycle codes is an interesting open problem.
\end{remark}

\subsection{Algebraic dimension formula}
The transpose of a circulant is a circulant, and the map
$a\mapsto\invol{a}$ is an \emph{automorphism} of $R$; hence
$\rank\mathrm{circ}(\invol{a})=\rank\mathrm{circ}(a)$ and
$\rank H_Z=\rank H_X$ identically. The quantum dimension is therefore
governed by a single block rank, which the quotient-ring structure makes
exact:

\begin{proposition}\label{prop:k}
Let $l$ be odd and $a,b\in R=\F_2[x]/(x^l-1)$. Then
\[
\begin{aligned}
k \;&=\; 2l-\rank H_X-\rank H_Z\\
    &=\; 2\deg\gcd\!\big(a(x),\,b(x),\,x^{l}-1\big).
\end{aligned}
\]
\end{proposition}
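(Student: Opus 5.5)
The plan is to identify the row space of $H_X=[\,A\mid B\,]$ with a cyclic $R$-submodule of $R^2$ and compute its dimension as a quotient of $R$ by an annihilator ideal. By the dictionary of the preceding subsection, the $i$-th row of $\mathrm{circ}(a)$ is the coefficient vector of $x^{i}a(x)$, up to the fixed indexing convention for circulants. So the $i$-th row of $H_X$ is the pair $(x^{i}a,\,x^{i}b)\in R^2$, and
\[
\row(H_X)=R\cdot(a,b)=\{(ra,rb):r\in R\},
\]
because the monomials $x^i$ span $R$ over $\F_2$. The $\F_2$-linear surjection $R\to R\cdot(a,b)$, $r\mapsto(ra,rb)$, then gives $\rank H_X=l-\dim_{\F_2}\mathrm{Ann}(a,b)$, where $\mathrm{Ann}(a,b)=\{r\in R: ra=rb=0\}$.

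Next I compute this annihilator. Put $g=\gcd(a,b,x^l-1)$, computed from representatives in $\F_2[x]$, and $h=(x^l-1)/g$. A polynomial $r$ satisfies $x^l-1\mid ra$ and $x^l-1\mid rb$ exactly when $h$ divides $r\cdot\gcd(a,x^l-1)/g$ and $r\cdot\gcd(b,x^l-1)/g$. I will argue this directly: $x^l-1\mid ra$ and $x^l-1\mid rb$ hold iff $x^l-1\mid r\gcd(a,b)$ in $\F_2[x]$, which holds iff $x^l-1\mid r g$, using Bézout in $\F_2[x]$ for both gcd steps. That last condition is equivalent to $h\mid r$. Hence $\mathrm{Ann}(a,b)=hR$.

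The ideal $hR$ consists of the multiples $hs$ with $\deg s<\deg g$, so it has dimension $l-\deg h=\deg g$. It follows that $\rank H_X=l-\deg g$. I expect this to be the main step to get right, since the divisibility bookkeeping must be done carefully in $\F_2[x]$ rather than in $R$. The hypothesis that $l$ is odd is not strictly needed here. It does, however, make $R$ semisimple, so one could alternatively decompose $R\cong\bigoplus_j \F_2[x]/(m_j)$ over the distinct irreducible factors $m_j$ of $x^l-1$ and count the factors where $a$ and $b$ both vanish, which gives an independent check of the count.

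For $H_Z=[\,\mathrm{circ}(\invol b)\mid\mathrm{circ}(\invol a)\,]$ the same argument gives $\rank H_Z=l-\deg\gcd(\invol b,\invol a,x^l-1)$. Since $p\mapsto\invol p$ is a ring automorphism of $R$, and it sends $x^l-1$ to a unit multiple of itself, the gcd for $(\invol b,\invol a)$ is the reciprocal polynomial of $g$ times a monomial. This holds because the reciprocal of a divisor of $x^l-1$ is again a divisor of $x^l-1$ of the same degree. Alternatively, rank is preserved under an automorphism applied blockwise, as already noted before the statement. Either way $\rank H_Z=\rank H_X$. The standard CSS count $k=n-\rank H_X-\rank H_Z$ with $n=2l$ then gives
\[
k=2l-2(l-\deg g)=2\deg g,
\]
which is the claimed formula.
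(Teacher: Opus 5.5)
Your proof is correct, and it reaches $\rank H_X=l-\deg g$ by a different route from the paper's.

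The paper projects the row module $\{(sa,sb):s\in R\}$ onto one coordinate. It asserts that the image is the ideal $(a,b)=(g)$ and that this projection is injective, and it invokes $l$ odd (squarefreeness of $x^l-1$) for the principal-ideal description. You instead use the surjection $r\mapsto(ra,rb)$, whose kernel is $\mathrm{Ann}(a,b)$, and compute $\mathrm{Ann}(a,b)=hR$ by B\'ezout in $\F_2[x]$.

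Your route buys two things. First, it never uses squarefreeness, so it proves the formula for every $l$, as you note; $R$ is a principal ideal ring for all $l$, being a quotient of $\F_2[x]$. Second, it avoids the paper's projection step, which does not hold as written. The first-coordinate projection is $(a)=(\gcd(a,x^l-1))$, not $(a,b)$. It is injective only when $\gcd(a,x^l-1)$ divides $b$. For instance, with $l=3$, $a=1+x$, $b=1$, the projection has dimension $2$ and its kernel is spanned by $(0,1+x+x^2)$, while $\rank H_X=3$. The paper's final count survives only because $\dim_{\F_2}R(a,b)=l-\dim\mathrm{Ann}(a,b)=\dim_{\F_2}gR$, which is exactly the annihilator identity you prove.

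The $H_Z$ step is the same in both arguments, since the involution is a ring automorphism of $R$. Your blockwise version settles it cleanly: applying $p\mapsto\invol{p}$ to both coordinates maps $R(a,b)$ bijectively onto $R(\invol{a},\invol{b})$.
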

\begin{proof}
Write $A=\mathrm{circ}(a)$, $B=\mathrm{circ}(b)$. The row space of $A$
inside $R$ (identified with $\F_2^{l}$) is the principal ideal $(a)\subseteq
R$; since $R$ is commutative, the row space of $H_X=[\,A\mid B\,]$ in
$R^2$ is the submodule $\{(sa,sb):s\in R\}$, whose projection to either
coordinate is the ideal
\[
I\;=\;(a,b)\;=\;\bigl(\gcd(a,b,x^{l}-1)\bigr)\;\subseteq\; R .
\]
For $l$ odd, $x^{l}-1$ is squarefree over $\F_2$, so $R$ is a principal
ideal ring and, for $h=\gcd(a,b,x^{l}-1)$,
$\dim_{\F_2}(h)=\deg\!\big((x^{l}-1)/h\big)=l-\deg h$. Projection onto the
first coordinate is injective on the row space: if $sa=0$ in $R$ then
$s\in\mathrm{Ann}_R(a)=((x^{l}-1)/\gcd(a,x^{l}-1))$; since
$h=\gcd(a,b,x^{l}-1)$ divides both $b$ and $\gcd(a,x^{l}-1)$, the annihilator
$((x^{l}-1)/\gcd(a,x^{l}-1))$ is contained in $\mathrm{Ann}_R(b)$, whence
$sb=0$ as well. Therefore $\rank H_X=\dim I=l-\deg h$. Since
$\rank H_Z=\rank H_X$ by the involution symmetry,
\[
k=2l-2\rank H_X=2l-2\bigl(l-\deg h\bigr)=2\deg h. \qedhere
\]
\end{proof}

\begin{remark}\label{rem:prefilter}
Proposition~\ref{prop:k} lets us pre-filter candidates by $k$ using a
polynomial gcd, before forming any matrix. Low-weight random $a,b$ are
generically coprime to $x^l-1$, giving $\rank H_X=l$ and hence $k=0$; to
obtain $k>0$ one must force $a,b$ to share a divisor of $x^l-1$. This
observation drives Construction~\ref{cons:search} below. In the
implementation the gcd filter is the first test applied to every candidate;
only survivors pay for a matrix build, and a rank cross-check
$k\stackrel{?}{=}n-\rank H_X-\rank H_Z$ on those survivors has never
disagreed with the gcd value, as the proposition predicts.
\end{remark}

\section{The Search Method}\label{sec:method}
\subsection{Structured polynomial construction}
\begin{construction}\label{cons:search}
Fix odd $l$ and factor $x^{l}-1=\prod_j f_j(x)$ over $\F_2$. For each divisor
$g=\prod_{j\in S}f_j$ with $\deg g$ in a target range, and for low-weight
$u(x),v(x)\in R$, set
\[
a(x)=g(x)u(x),\qquad b(x)=g(x)v(x)\pmod{x^{l}-1}.
\]
Then $g\mid\gcd(a,b,x^l-1)$, so $k\ge 2\deg g$, with equality when
$\gcd(u,v,x^{l}-1)=1$. The row weight is at most
$\wt(gu)+\wt(gv)$, keeping the code LDPC.
\end{construction}

The construction has a two-sided algebraic handle on $k$:

\begin{theorem}\label{thm:window}
Let $l$ be odd, $g\mid x^{l}-1$, and $u,v\in R$ arbitrary. Put
$a=gu$, $b=gv$ in $R$. Then
\[
2\deg g\;\le\;k\;\le\;2\deg g\;+\;2\deg\gcd\!\bigl(u,\,v,\,\tfrac{x^{l}-1}{g}\bigr).
\]
In particular, if $u$ and $v$ share no cyclotomic factor of $(x^l-1)/g$ then
$k=2\deg g$ exactly, so the divisor degree \emph{dials} the quantum
dimension; and every bicycle code of dimension $k$ admits such a presentation
with $g=\gcd(a,b,x^{l}-1)$, $\deg g=k/2$.
\end{theorem}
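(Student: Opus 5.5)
The plan is to reduce everything to Proposition~\ref{prop:k}, which gives $k=2\deg h$ with $h=\gcd(a,b,x^l-1)$, and then to locate $h$ between $g$ and $g\cdot\gcd(u,v,(x^l-1)/g)$ using the fact that $x^l-1$ is squarefree for odd $l$. Throughout, write $N=x^l-1$, and identify elements of $R$ with their canonical representatives of degree $<l$. Gcds with $N$ are then well defined, because $\gcd(p,N)=\gcd(p+qN,N)$ for any polynomial $q$.

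\emph{Lower bound.} Since $g\mid N$, the polynomial $g$ divides the representative of $gu \bmod N$: indeed $gu-qN=g(u-q\,N/g)$. Likewise $g$ divides the representative of $gv$. Hence $g\mid h$, and Proposition~\ref{prop:k} gives $k\ge 2\deg g$.

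\emph{Upper bound.} Write $h=g\,h'$, which is possible because $g\mid h$ and $h\mid N$, so $h'\mid N/g$. I claim $h'\mid\gcd(u,v,N/g)$. Take an irreducible factor $f$ of $h'$. Since $N$ is squarefree and $gh'\mid N$, the factor $f$ does not divide $g$. As $f\mid h'$ and $h'\mid a/g$ in the lifted sense, $f$ divides $gu+qN=g(u+qN/g)$. Since $f\nmid g$, it follows that $f\mid u+qN/g$. Because $f\mid N/g$ as well, we get $f\mid u$. The same argument gives $f\mid v$. Since $h'$ is squarefree, the product of its irreducible factors, namely $h'$ itself, divides $\gcd(u,v,N/g)$. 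Therefore $\deg h\le \deg g+\deg\gcd(u,v,N/g)$. Multiplying by $2$ and applying Proposition~\ref{prop:k} yields the upper bound.

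\emph{Consequences.} If $u$ and $v$ share no irreducible factor of $N/g$, the gcd term is $1$, so both bounds coincide and $k=2\deg g$. For the converse presentation, take any $a,b$ and set $g=h=\gcd(a,b,N)$. Then $g\mid a$ and $g\mid b$ as polynomials, so $a=gu$ and $b=gv$ with $u=a/g$ and $v=b/g$, and Proposition~\ref{prop:k} gives $\deg g=k/2$.

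The main obstacle is the bookkeeping of gcds for elements of the quotient ring. The step that genuinely needs squarefreeness, and hence odd $l$, is showing that $f\nmid g$. I would handle this by first stating the well-definedness remark, and then arguing factor by factor with irreducibles rather than with whole gcds.
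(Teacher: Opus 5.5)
Your proof is correct and follows essentially the paper's route. Both arguments reduce to Proposition~\ref{prop:k} and compare $h=\gcd(a,b,x^l-1)$ with $g\cdot\gcd\bigl(u,v,(x^l-1)/g\bigr)$. The paper does this in one line via the identity $\gcd(gu,gv,x^l-1)=g\gcd\bigl(u,v,(x^l-1)/g\bigr)$, which also shows that the upper bound is always attained. You instead check $g\mid h$ and $h\mid g\gcd\bigl(u,v,(x^l-1)/g\bigr)$ factor by factor, with the reduction modulo $x^l-1$ made explicit.

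One small remark: the step $f\nmid g$ does not actually need squarefreeness. Cancelling $g$ in the integral domain $\F_2[x]$ already gives $h'\mid u+q\,(x^l-1)/g$, so odd $l$ is used only through Proposition~\ref{prop:k}.
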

\begin{proof}
Since $g\mid x^{l}-1$ and $x^{l}-1$ is squarefree, $\gcd(g,(x^l-1)/g)=1$,
hence
$\gcd(gu,gv,x^{l}-1)=g\cdot\gcd\!\big(u,v,(x^{l}-1)/g\big)$. Apply
Proposition~\ref{prop:k}. For the last claim, given any bicycle code from
$(a,b)$, take $g=\gcd(a,b,x^{l}-1)$, $u=a/g$, $v=b/g$: then
$\gcd(u,v,(x^l-1)/g)=1$ and $k=2\deg g$.
\end{proof}

\begin{lemma}\label{lem:weightcap}
For every $t\in R$, the vector
\[
\lambda(t)\;:=\;\bigl(\,\invol{a}(x)\,t(x)\;\big|\;\invol{b}(x)\,t(x)\,\bigr)
\ \in\ \F_2^{2l}
\]
lies in $\ker H_Z$. Writing
$w_{\min}(a,b)=\min_{t\in R}\bigl(\wt(at)+\wt(bt)\bigr)$ (the minimum is over
the cyclic translates-and-combinations of the pair, and
$\wt(\invol{a}t)=\wt(a\,\invol{t})$), the X-side distance satisfies
\[
\begin{aligned}
d_X\;&\le\;w_{\min}(a,b)\;\le\;\wt(a)+\wt(b)\\
     &\le\;\wt(g)\,\bigl(\wt(u)+\wt(v)\bigr),
\end{aligned}
\]
whenever the module $\{\lambda(t):t\in R\}$ is not entirely contained in
$\row H_X$ (equivalently, whenever the X-stabilizer does not already absorb
the involution-translated pair; this holds generically and is checked
explicitly by the exact distance computation).
\end{lemma}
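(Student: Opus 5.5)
The plan is to check three things in turn: that $\lambda(t)$ lies in $\ker H_Z$, that the distance is bounded by $w_{\min}(a,b)$, and that the weight chain holds. Throughout we use the dictionary of Section~\ref{sec:poly}: circulant matrices become multiplication in $R$, and transposes become the involution. Vectors of $\F_2^{2l}$ are pairs $(p\mid q)$ with $p,q\in R$.

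\emph{Membership in the kernel.} Applying $H_Z=[\,\mathrm{circ}(\invol b)\mid\mathrm{circ}(\invol a)\,]$ to $(p\mid q)$ gives $\invol b\,p+\invol a\,q$, up to the fixed convention for how circulants act on row versus column vectors. With the convention in which the rows of $\mathrm{circ}(c)$ span $(c)$, the kernel condition becomes
\[
b\,p+a\,q\;=\;0\quad\text{in }R .
\]
Whichever convention is used, the check is a single commutative identity. For the pair as stated, $(\invol a t,\invol b t)$, we get $\invol b\,\invol a\,t+\invol a\,\invol b\,t=2\,\invol a\invol b\,t=0$ in characteristic $2$. This is the same cancellation as in Theorem~\ref{thm:weld}, but with Lemma~\ref{lem:reciprocity}'s automorphism applied. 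So the first step is to fix the convention once and show the kernel condition collapses to $ab+ba=0$ or its involuted form. I expect this bookkeeping of where the bars sit to be the only delicate point.

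\emph{The distance bound.} If some $\lambda(t)\notin\row H_X$, it is a nontrivial logical operator in $\ker H_Z\setminus\row H_X$, so $d_X\le\wt(\lambda(t))$. The problem is that the hypothesis only guarantees one such $t$, not a minimizing one. I would handle this by reading $w_{\min}$ as the minimum over those $t$ with $\lambda(t)\notin\row H_X$, which is what the parenthetical nondegeneracy proviso intends, and taking the minimum over them.

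Next I convert weights. Since the involution is a weight-preserving automorphism, $\wt(\invol a t)=\wt(a\invol t)$, and $t\mapsto\invol t$ is a bijection of $R$. Hence the minimum of $\wt(\lambda(t))$ over $t$ equals $\min_s(\wt(as)+\wt(bs))=w_{\min}(a,b)$.

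\emph{The weight chain.} Taking $t=1$ (or $s=1$) gives $w_{\min}(a,b)\le\wt(a)+\wt(b)$. This requires the $t=1$ choice to be nondegenerate; otherwise the middle inequality is to be read under the same generic proviso. Finally, $a=gu$ reduced mod $x^l-1$ is a sum of $\wt(u)$ shifted copies of $g$, so $\wt(gu)\le\wt(g)\wt(u)$ by subadditivity of Hamming weight. The same holds for $b=gv$, giving the last inequality.

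The main obstacle is the logical part of the middle inequality: making rigorous that the minimizing translate is not a stabilizer. The honest route is to state it as the proviso and note that the exact computation of Lemma~\ref{lem:exactd} verifies it.
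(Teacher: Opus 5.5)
Your outline is the paper's own argument: commutative cancellation for membership in $\ker H_Z$, weight-invariance of the involution, and $\wt(gu)\le\wt(g)\wt(u)$. You are also right that $w_{\min}$ must be restricted to those $t$ with $\lambda(t)\notin\row H_X$, since as literally written it equals $0$ (take $t=0$, or any $t\neq0$ with $gt=0$).

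The genuine gap is the step you set aside as bookkeeping. Your claim that ``whichever convention is used, the check is a single commutative identity'' is false. Take the convention you fix, in which the rows of $\mathrm{circ}(c)$ span $(c)$. There $\mathrm{circ}(c)$ acts on column vectors as multiplication by $\invol{c}$, so $H_Z(p\mid q)^{T}$ corresponds to $bp+aq$, as you say. Substituting $p=\invol{a}t$, $q=\invol{b}t$ then gives $(a\invol{b}+b\invol{a})\,t$, not $\invol{b}\invol{a}t+\invol{a}\invol{b}t$; your display silently switches to the other convention. The element $a\invol{b}+b\invol{a}$ is the involution identity of Remark~\ref{rem:cheaporth}, and commutativity does not make it vanish: for $a=1$, $b=x$ it equals $x+x^{-1}\neq0$. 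So in your convention, $\lambda(t)\in\ker H_Z$ for all $t$ holds if and only if $a\invol{b}=b\invol{a}$.

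Now take the other convention, where $\mathrm{circ}(c)$ multiplies column vectors by $c$ and the kernel condition really is $\invol{b}p+\invol{a}q=0$; this is the form used in the paper's proof. The cancellation is then valid. However, the $i$th row of $\mathrm{circ}(c)$ is $x^{i}\invol{c}$, so $\row H_X=\{(s\invol{a}\mid s\invol{b}):s\in R\}=\{\lambda(t):t\in R\}$. Every $\lambda(t)$ is a stabilizer; indeed $\lambda(1)$ is the first row of $H_X$, so your $t=1$ step fails outright. The proviso is never satisfied, and the chain $d_X\le w_{\min}(a,b)\le\wt(a)+\wt(b)$ is never reached.

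So the placement of the bars is the whole content. One convention falsifies kernel membership, the other makes the distance claim vacuous, and the paper's printed proof has exactly the same gap.

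A meaningful version keeps your convention and adds the hypothesis $a\invol{b}=b\invol{a}$, which the tabulated codes satisfy per Remark~\ref{rem:cheaporth}. Then kernel membership holds. Splitting $R$ into fields by the Chinese remainder theorem, with $h=\gcd(a,b,x^{l}-1)$, the following are equivalent:
\begin{itemize}
\item the proviso holds;
\item $\lambda(1)\notin\row H_X$, so your $t=1$ step becomes legitimate;
\item some irreducible factor $f$ of $h$ has reciprocal $f^{*}\nmid h$.
\end{itemize}
Since $2^{5}\equiv-1\pmod{33}$, every irreducible factor of $x^{33}-1$ is self-reciprocal. Hence even this repaired bound never applies to the $l=33$ codes.
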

\begin{proof}
The kernel condition for $(p\mid q)\in\ker H_Z$ is
$B^{T}p+A^{T}q=0$, i.e.\ $\invol{b}\,p+\invol{a}\,q=0$ in $R$. Substituting
$p=\invol{a}t$, $q=\invol{b}t$ gives
$\invol{b}\invol{a}t+\invol{a}\invol{b}t=0$ in characteristic $2$, since $R$
is commutative. The weight chain uses $\wt(\invol{a}t)=\wt(a\invol{t})$
(transpose preserves weight) and the sub-multiplicativity
$\wt(gu)\le\wt(g)\wt(u)$.
\end{proof}

\begin{remark}\label{rem:w2}
Lemma~\ref{lem:weightcap} explains the design constraint
$w_u,w_v\ge3$ used throughout our campaign: for $\wt(u)=\wt(v)=2$ the
translated pair $(\invol{a}t\mid\invol{b}t)$ degenerates, since choosing
$t=\sum_{j=0}^{m-1}x^{ji}$ with $mi\equiv0\pmod l$ annihilates
$u=1+x^{i}$, and the surviving low-weight members of the kernel family
empirically cap $d\le 4$; we never observed a weight-$2$ lift with $d\ge5$.
A sharp closed form for
$w_{\min}(a,b)$ in terms of $(g,u,v)$, in particular whether
$d\le\wt(u)+\wt(v)$ holds when $\gcd(u,v)=1$ (saturated by the
$[[66,20,6]]$ code of Table~\ref{tab:explicit}, where
$\wt(u)+\wt(v)=6=d$), is an interesting open problem; see
Section~\ref{sec:value}.
\end{remark}

\begin{algorithm*}[t]
\caption{Divisor-driven search for bicycle codes}\label{alg:search}
\begin{algorithmic}[1]
\Require $l$-list $\mathcal L$; weight sets $\mathcal W_u,\mathcal W_v$;
range $[k_{\min},k_{\max}]$; $d_{\min}$; enumeration cap $T$
\ForAll{$l\in\mathcal L$}
  \State factor $x^{l}-1=\prod_j f_j$; collect divisors $g=\prod_{j\in S}f_j$
  with $2\deg g\in[k_{\min},k_{\max}]$
  \ForAll{$(w_u,w_v)\in\mathcal W_u\times\mathcal W_v$}
    \For{each candidate pair $(u,v)$ (exhaustive if few, else $T$ samples)}
      \State $a\gets gu$, $b\gets gv$ in $R$;\quad
      $k\gets 2\deg\gcd(a,b,x^l-1)$
      \If{$k\notin[k_{\min},k_{\max}]$} \textbf{continue}\EndIf
      \State dedup on $(a\,|\,b)$
      \If{$a\invol b+b\invol a\not\equiv0$} \textbf{continue}\EndIf
      \State build $H_X=[A|B]$, $H_Z=[B^{T}|A^{T}]$; cross-check $k$
      \If{heuristic distance estimate $<d_{\min}$} \textbf{continue}\EndIf
      \State $d\gets$ exact distance (Lemma~\ref{lem:exactd})
      \If{$d\ge d_{\min}$} store $(g,u,v)$ and parameters\EndIf
    \EndFor
  \EndFor
\EndFor
\end{algorithmic}
\end{algorithm*}

\subsection{Exact distance: excluding the stabilizer}
A subtle point is that the Calderbank image of $C_s^{\perp}$ contains the
stabilizer itself, whose elements can have weight as low as the check
weight; a naive computation of the minimum distance of the additive code
therefore undercounts the quantum distance and rejects genuine
high-distance candidates. The fix is a two-kernel computation:

\begin{lemma}\label{lem:exactd}
For the bicycle code with checks $H_X=[A|B]$, $H_Z=[B^T|A^T]$, define
\[
\begin{aligned}
d_X&=\min\bigl\{\wt(c):c\in\ker H_Z\setminus\row H_X\bigr\},\\
d_Z&=\min\bigl\{\wt(c):c\in\ker H_X\setminus\row H_Z\bigr\}.
\end{aligned}
\]
Then the quantum distance is $d=\min(d_X,d_Z)$, and each side is computed by
walking the weight classes of the kernel code in increasing weight and
testing membership in the stabilizer row space.
\end{lemma}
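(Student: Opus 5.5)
The plan is to reduce the general stabilizer distance formula from the preliminaries to the CSS structure. That formula is $d=\min\wt_s(C_s^{\perp}\setminus C)$ with $C=\row\binom{H_X\ 0}{0\ H_Z}$; by Theorem~\ref{thm:weld}, $C$ is self-orthogonal. The steps, in order, are as follows.

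First I identify $C_s^{\perp}$. A vector $(p\mid q)\in\F_2^{2n}$, with $p$ the X-part and $q$ the Z-part, is symplectically orthogonal to every row $(h\mid 0)$ with $h\in\row H_X$ iff $H_Xq=0$. It is orthogonal to every row $(0\mid h')$ with $h'\in\row H_Z$ iff $H_Zp=0$. Hence
\[
C_s^{\perp}=\ker H_Z\times\ker H_X,\qquad C=\row H_X\times\row H_Z .
\]
Both are direct products of an X-component and a Z-component.

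Second I establish the two inequalities.

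For $d\le\min(d_X,d_Z)$: take $c\in\ker H_Z\setminus\row H_X$ of weight $d_X$. Then $(c\mid 0)\in C_s^{\perp}\setminus C$ and $\wt_s(c\mid 0)=\wt(c)$. The Z-side is symmetric, using $(0\mid c')$.

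For $d\ge\min(d_X,d_Z)$: take $(p\mid q)\in C_s^{\perp}\setminus C$. By the product structure, either $p\in\ker H_Z\setminus\row H_X$ or $q\in\ker H_X\setminus\row H_Z$. Since $\wt_s(p\mid q)=|\supp p\cup\supp q|\ge\max(\wt p,\wt q)$, we get $\wt_s(p\mid q)\ge d_X$ in the first case and $\ge d_Z$ in the second. This gives the claim.

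I expect the only real subtlety to be this last step. $C_s^{\perp}\setminus C$ is not itself a product, so the argument has to go through the "either/or" membership rather than decomposing the set. The degenerate conventions also need care: if one side has $\ker=\row$, its minimum is over the empty set and is taken as $+\infty$, which is consistent when $k=0$.

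Finally, the algorithmic statement follows from how the minimum is defined. Enumerate the codewords of the classical code $\ker H_Z$ in nondecreasing weight, and return the first weight at which some codeword fails the linear membership test against $\row H_X$ (a rank check). By construction this equals $d_X$, and likewise $d_Z$. I would add that by Lemma~\ref{lem:reciprocity}-type symmetry one may expect $d_X=d_Z$, but the lemma does not need it.
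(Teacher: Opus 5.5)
Your proof is correct. The paper states this lemma without proof and implicitly relies on exactly the standard CSS reduction you give: $C_s^{\perp}=\ker H_Z\times\ker H_X$ and $C=\row H_X\times\row H_Z$ turn $\min\wt_s(C_s^{\perp}\setminus C)$ into the two one-sided minima, and your either/or step correctly handles the fact that $C_s^{\perp}\setminus C$ is not a product.

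Your hedged closing aside is in fact true in the cyclic case. In $R$, the weight-preserving map $(c_1\mid c_2)\mapsto(\invol{c_2}\mid\invol{c_1})$ sends $\ker H_Z$ onto $\ker H_X$ and $\row H_X$ onto $\row H_Z$. Hence $d_X=d_Z$, and only one side needs to be enumerated.
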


The weight-enumeration of Lemma~\ref{lem:exactd} is the bottleneck of the
pipeline, so it is guarded by two cheap gates: the gcd pre-filter
(Proposition~\ref{prop:k}) and a Monte-Carlo upper bound on $d$ obtained by
sampling the kernel and reweighting; candidates whose sampled upper bound
already falls below $d_{\min}$ are vetoed before any exact call. For $n>50$,
where Magma's quantum tables are unavailable, the pipeline maintains a
filtering reference $d_{\rm ref}$, the best-known minimum distance of an
$\F_4$ linear code at the near-full dimension $n-k/2$ (BKLC bounds).
Because $n-k/2\approx n$, this reference is deliberately permissive:
$d>d_{\rm ref}$ holds for almost every survivor and carries no information
about how a code compares with the best-known quantum codes at the same
$[[n,k]]_2$. We therefore use $d_{\rm ref}$ only as a search-filter
threshold and never as a performance benchmark.

\section{Results}\label{sec:results}
We report on the sweep
$l\in\{21,23,25,27,31,33,35,39,45\}$,
low-weight lift pairs $(u,v)$ (realized weights $3\le w_u,w_v\le 13$),
$k\in[2,20]$, $d_{\min}=5$, with exhaustive
enumeration of all lift pairs whenever
$\binom{l-1}{w_u-1}\binom{l-1}{w_v-1}\le 2\times10^{5}$ (satisfied at
$l=23$ and, for the lightest weight pairs, at $l\le 35$ up to symmetry)
and $2\times10^{5}$
random draws otherwise. Distances are exact
(Lemma~\ref{lem:exactd}). Table~\ref{tab:codes} collects the best code found
at each parameter set; the quantum-LDPC figure of merit $kd^2/n$ is used
throughout as the quality measure, and for reference the bivariate bicycle
code $[[144,12,12]]_2$ of \cite{BB24} attains $kd^2/n=12$.

\begin{table*}[t]
\centering
\caption{Best bicycle codes found by the polynomial search. Distances
are exact; $w=\wt(a)+\wt(b)$ is the stabilizer row weight. For reference,
the bivariate bicycle code $[[144,12,12]]_2$ of \cite{BB24} attains
$kd^2/n=12$.}
\label{tab:codes}
\renewcommand{\arraystretch}{1.2}
\begin{tabular}{@{}lcccccll@{}}
\toprule
$[[n,k,d]]_2$ & $l$ & $\deg g$ & $(w_u,w_v)$ & $w$ & $kd^2/n$ & note \\
\midrule
$[[42,12,4]]_2$ & 21 & 6  & (2,2)$^{*}$ & --  & 4.57 & weight-2 lift \\
$[[46,2,8]]_2$  & 23 & 1  & (4,4) & 14  & 2.78 & $k=2$ family \\
$[[62,12,4]]_2$ & 31 & 6  & (2,2)$^{*}$ & --  & 3.10 & weight-2 lift \\
$[[66,2,9]]_2$  & 33 & 1  & (4,4) & 16  & 2.45 & $k=2$ family \\
$[[66,4,8]]_2$  & 33 & 2  & (3,4) & 13  & 3.88 & lowest $w$ \\
$[[66,6,8]]_2$  & 33 & 3  & (4,11) & 14  & 5.82 & \\
$[[66,20,7]]_2$ & 33 & 10 & (11,12) & 19  & 14.85 & highest $kd^2/n$ \\
$[[66,20,6]]_2$ & 33 & 10 & (3,3) & 22  & 10.91 & same divisor as $[[66,20,7]]_2$ \\
$[[66,20,5]]_2$ & 33 & 10 & (9,12) & 15  & 7.58 & lowest $w$ at $k=20$ \\
$[[90,16,6]]_2$ & 45 & 8  & (10,3) & 31  & 6.40 & \\
$[[90,18,6]]_2$ & 45 & 9  & (11,13) & 30  & 7.20 & \\
$[[90,20,6]]_2$ & 45 & 8  & (11,3) & 28  & 8.00 & \\
\bottomrule
\end{tabular}

\vspace{2pt}
{\footnotesize $^{*}$Recovered by an earlier weight-2 lift scan; consistent
with Remark~\ref{rem:w2}, weight-2 lifts cap at $d\le4$.}
\end{table*}

\begin{figure*}[t]
\centering
\begin{tikzpicture}
\begin{axis}[width=7.3cm,height=5.1cm,title={$n=66$ ($l=33$)},
  xlabel={dimension $k$}, ylabel={distance $d$},
  xmin=0,xmax=22,ymin=0,ymax=10, xtick={2,4,6,20}, ytick={2,4,6,8},
  grid=both, grid style={gray!25}, legend style={font=\footnotesize},
  legend pos=north west, thick]
  \addplot[only marks,mark=*,mark size=2.6pt,red!75!black,
    point meta=explicit symbolic, nodes near coords,
    every node near coord/.append style={font=\scriptsize, red!75!black}]
    coordinates {(2,9) [2.45] (4,8) [3.88] (6,8) [5.82] (20,7) [14.85]};
  \addlegendentry{this work (exact $d$)}
\end{axis}
\end{tikzpicture}
\hspace{4mm}
\begin{tikzpicture}
\begin{axis}[width=7.3cm,height=5.1cm,title={$n=90$ ($l=45$)},
  xlabel={dimension $k$}, ylabel={distance $d$},
  xmin=14,xmax=22,ymin=0,ymax=10, xtick={16,18,20}, ytick={4,6,8},
  grid=both, grid style={gray!25}, legend style={font=\footnotesize},
  legend pos=north west, thick]
  \addplot[only marks,mark=*,mark size=2.6pt,red!75!black,
    point meta=explicit symbolic, nodes near coords,
    every node near coord/.append style={font=\scriptsize, red!75!black}]
    coordinates {(16,6) [6.40] (18,6) [7.20] (20,6) [8.00]};
  \addlegendentry{this work (exact $d$)}
\end{axis}
\end{tikzpicture}
\caption{Parameter landscape at $n=66$ and $n=90$: exact distance $d$
against dimension $k$ for the best code found at each $k$; labels give
$kd^2/n$ (the bivariate bicycle codes $[[144,12,12]]_2$ and
$[[90,8,10]]_2$ attain $12$ and $8.9$).}
\label{fig:landscape}
\end{figure*}

The case $l=33$ is the most productive in the sweep, owing to the
factorization
\[
x^{33}-1=(x+1)(x^{2}+x+1)\,f_{10}\,f'_{10}\,f''_{10}\qquad\text{over }\F_2,
\]
with three distinct irreducible factors of degree $10$
(e.g.\ $f''_{10}=x^{10}+x^{7}+x^{5}+x^{3}+1$). By
Theorem~\ref{thm:window} the dimensions attainable within the searched range
$k\in[2,20]$ are exactly $k\in\{2,4,6,20\}$ (divisor degrees $1,2,3,10$;
degree $11$ would give $k=22$, outside the range), and the search
populates each of them with high-distance codes: the realized distances are
\[
\begin{gathered}
d=5,6,7,8,9\ (k{=}2),\qquad d=5\!-\!8\ (k{=}4),\\
d=5\!-\!8\ (k{=}6),\qquad d=5,6,7\ (k{=}20),
\end{gathered}
\]
with $51$ inequivalent codes in total. The $k=20$ stratum is
notable: a single irreducible divisor
$g=f''_{10}=x^{10}+x^{7}+x^{5}+x^{3}+1$ of degree $10$ supports codes at
all three distances $5,6,7$; the distance-$7$ code attains
$kd^2/n=14.85$, the highest value in the sweep and above the bivariate
bicycle code $[[144,12,12]]_2$ ($kd^2/n=12$).

\begin{example}\label{ex:66}
Take $l=33$, $g=x^{10}+x^{7}+x^{5}+x^{3}+1$ (one of the three degree-$10$
irreducible factors of $x^{33}-1$) and the lifts
\[
\begin{aligned}
u&=\{1,2,3,4,12,13,14,15,16,19,22\},\\
v&=\{0,1,6,8,9,10,11,12,16,19,20,22\}
\end{aligned}
\]
(support notation; $\wt(u)=11$, $\wt(v)=12$). Then
$a=gu$, $b=gv$ have weights $7$ and $12$, so the checks have row weight
$19$; the gcd pre-filter certifies $k=2\deg g=20$ before any matrix is
built, and the exact computation of Lemma~\ref{lem:exactd} gives $d=7$
with $kd^2/n=14.85$, outperforming the bivariate bicycle code
$[[144,12,12]]_2$ ($kd^2/n=12$) at less than half the block length.
\end{example}

Table~\ref{tab:explicit} gives the full construction data of representative
codes, including all codes referenced in this paper's search campaign.

\begin{table*}[t]
\centering
\caption{Explicit polynomial constructions. Supports are
exponents of nonzero terms in $\F_2[x]/(x^{l}-1)$; $a=gu$, $b=gv$;
$w=\wt(a)+\wt(b)$ is the stabilizer row weight; $d_{\rm ref}$ is the
internal filter threshold of Section~\ref{sec:method}, not a performance
benchmark.}
\label{tab:explicit}
\renewcommand{\arraystretch}{1.25}
\footnotesize
\setlength{\tabcolsep}{3.5pt}
\newcommand{\suppS}[1]{$\{\suppSlist#1,\suppSend\}$}
\def\suppSlist#1,#2\suppSend{#1\ifx\relax#2\relax\else,\allowbreak\ \suppSlist#2\suppSend\fi}
\begin{tabular}{@{}p{1.75cm}p{0.55cm}p{2.3cm}p{2.6cm}p{3.3cm}ccp{0.8cm}@{}}
\toprule
$[[n,k,d]]_2$ & $l$ & $\supp g$ & $\supp u$ & $\supp v$ & $(w_u,w_v)$ & $w$ & $d_{\rm ref}$ \\
\midrule
$[[46,2,8]]_2$ & 23
  & \suppS{0,1}
  & \suppS{0,12,17,18}
  & \suppS{0,4,8,12}
  & (4,4) & 14 & 2 \\
\addlinespace[2pt]
$[[66,2,9]]_2$ & 33
  & \suppS{0,1}
  & \suppS{0,4,21,25}
  & \suppS{0,5,20,25}
  & (4,4) & 16 & 2 \\
\addlinespace[2pt]
$[[66,4,8]]_2$ & 33
  & \suppS{0,1,2}
  & \suppS{0,1,17}
  & \suppS{0,1,4,30}
  & (3,4) & 13 & 2 \\
\addlinespace[2pt]
$[[66,6,8]]_2$ & 33
  & \suppS{0,3}
  & \suppS{0,5,14,23}
  & \suppS{1,2,3,6,9,12,15,18,21,24,27}
  & (4,11) & 14 & 2 \\
\addlinespace[2pt]
$[[66,20,5]]_2$ & 33
  & \suppS{0,3,5,7,10}
  & \suppS{1,3,4,8,9,11,12,14,19}
  & \suppS{0,1,3,4,7,9,10,14,15,16,18,21}
  & (9,12) & 15 & 5 \\
\addlinespace[2pt]
$[[66,20,6]]_2$ & 33
  & \suppS{0,3,5,7,10}
  & \suppS{0,1,7}
  & \suppS{0,2,16}
  & (3,3) & 22 & 5 \\
\addlinespace[2pt]
$[[66,20,7]]_2$ & 33
  & \suppS{0,3,5,7,10}
  & \suppS{1,2,3,4,12,13,14,15,16,19,22}
  & \suppS{0,1,6,8,9,10,11,12,16,19,20,22}
  & (11,12) & 19 & 5 \\
\addlinespace[2pt]
$[[90,16,6]]_2$ & 45
  & \suppS{0,1,3,4,5,7,8}
  & \suppS{5,6,10,15,16,21,25,30,31,36}
  & \suppS{0,18,27}
  & (10,3) & 31 & 4 \\
\addlinespace[2pt]
$[[90,18,6]]_2$ & 45
  & \suppS{0,3,4,5,8,9}
  & \suppS{0,2,3,5,9,17,18,20,32,33,35}
  & \suppS{0,2,3,5,9,17,18,20,24,30,32,33,35}
  & (11,13) & 30 & 5 \\
\addlinespace[2pt]
$[[90,20,6]]_2$ & 45
  & \suppS{0,1,3,4,5,7,8}
  & \suppS{4,5,6,11,15,20,21,26,30,35,36}
  & \suppS{0,14,31}
  & (11,3) & 28 & 5 \\
\bottomrule
\end{tabular}
\end{table*}

Extending the search to lengths $n\le 200$ (i.e.\ $l\le 100$, including
$l=85$, $n=170$) and to a wider range of asymmetric weight pairs $(w_u,w_v)$
yields a substantially larger catalogue of codes with
competitive parameters. Table~\ref{tab:newparam} collects the codes with
the highest ratio $kd^2/n$; Table~\ref{tab:newdecode} lists the low-weight
codes selected for the decoding study of Section~\ref{sec:decode}; and
Table~\ref{tab:newk2} exhibits the $k=2$ family (degree-one divisors
$g=x+1$) whose distance grows roughly with the block length while the stabilizer
weight stays at most $16$.

\begin{table*}[t]
\centering
\caption{High-$kd^2/n$ codes from the polynomial search. Supports are exponents of nonzero terms in $\F_2[x]/(x^l-1)$; $a=gu$, $b=gv$; $w=\wt(a)+\wt(b)$.}
\label{tab:newparam}
\renewcommand{\arraystretch}{1.25}
\footnotesize
\setlength{\tabcolsep}{3.5pt}
\newcommand{\suppS}[1]{$\{\suppSlist#1,\suppSend\}$}
\def\suppSlist#1,#2\suppSend{#1\ifx\relax#2\relax\else,\allowbreak\ \suppSlist#2\suppSend\fi}
\begin{tabular}{@{}p{1.75cm}p{0.55cm}p{2.4cm}p{2.9cm}p{2.9cm}cc@{}}
\toprule
$[[n,k,d]]_2$ & $l$ & $\supp g$ & $\supp a$ & $\supp b$ & $w$ & $kd^2/n$ \\
\midrule
$[[66,20,7]]_2$ & 33
  & \suppS{0,3,5,7,10}
  & \suppS{1,2,3,5,10,27,32}
  & \suppS{0,1,3,4,5,7,13,15,22,24,30,32}
  & 19 & 14.85 \\
\addlinespace[2pt]
$[[42,16,6]]_2$ & 21
  & \suppS{0,1,3,6,8}
  & \suppS{0,1,2,8,9,11,17,18,20}
  & \suppS{0,1,2,3,4,6,8,11,12,14,18,19,20}
  & 22 & 13.71 \\
\addlinespace[2pt]
$[[90,18,8]]_2$ & 45
  & \suppS{0,3,4,5,8,9}
  & \suppS{0,1,2,3,4,6,8,9,12,15,16,17,20,21,36,39,40,41,42,44}
  & \suppS{0,1,2,3,4,5,8,9,27,30,31,32,34,35,36,37,39,41}
  & 38 & 12.80 \\
\addlinespace[2pt]
$[[42,14,6]]_2$ & 21
  & \suppS{0,1,2,4,6}
  & \suppS{1,3,5,6,7,9,12,13,14,16,17,19}
  & \suppS{1,2,4,6,11,12,13,14,15,18,19,20}
  & 24 & 12.00 \\
\addlinespace[2pt]
$[[70,16,7]]_2$ & 35
  & \suppS{0,1,3,5,6,8}
  & \suppS{0,1,3,5,6,8,10,11,13,15,16,18,21,23,24,25,26,28,29,30}
  & \suppS{1,2,3,5,6,8,11,13,14,15,16,18,19,20,29,30,32,34}
  & 38 & 11.20 \\
\addlinespace[2pt]
$[[90,20,7]]_2$ & 45
  & \suppS{0,1,2,4,5,8,10}
  & \suppS{0,1,3,5,7,8,9,10,36,39,40,42}
  & \suppS{0,1,2,4,5,8,10,21,22,23,25,26,29,30,32,34,35,38,40}
  & 31 & 10.89 \\
\addlinespace[2pt]
$[[42,18,5]]_2$ & 21
  & \suppS{0,2,3,4,6,7,8,9}
  & \suppS{0,1,2,5,7,10,12,13,15,16,18,19}
  & \suppS{1,3,4,6,8,9,10,12,16,20}
  & 22 & 10.71 \\
\addlinespace[2pt]
$[[54,16,6]]_2$ & 27
  & \suppS{0,3,6}
  & \suppS{0,1,2,3,4,5,6,8,25}
  & \suppS{0,1,3,5,6,8,11,22,25}
  & 18 & 10.67 \\
\addlinespace[2pt]
$[[42,12,6]]_2$ & 21
  & \suppS{0,1,2}
  & \suppS{1,2,13,14,15,17,18,20}
  & \suppS{0,1,2,5,6,7,8,9,10,16,17,18}
  & 20 & 10.29 \\
\addlinespace[2pt]
$[[70,20,6]]_2$ & 35
  & \suppS{0,2,4,5,6,8,10}
  & \suppS{1,2,3,4,5,6,7,8,9,11,13,15,30,32,34}
  & \suppS{0,2,5,8,10,11,34}
  & 22 & 10.29 \\
\addlinespace[2pt]
$[[170,18,8]]_2$ & 85
  & \suppS{0,3,4,5,6,7,8}
  & \suppS{0,3,4,5,6,7,8,46,49,50,51,52,53,54,59,62,63,64,65,66,67,72,75,76,77,78,79,80}
  & \suppS{0,3,4,5,6,7,8,46,49,50,51,52,53,54,61,64,65,66,67,68,69,70,73,74,75,76,77,78}
  & 56 & 6.78 \\
\addlinespace[2pt]
$[[170,18,7]]_2$ & 85
  & \suppS{0,1,3,9}
  & \suppS{0,1,3,9,22,23,25,31,38,39,41,47,69,70,72,78}
  & \suppS{0,1,3,9,11,12,14,20,22,23,25,31}
  & 28 & 5.19 \\
\addlinespace[2pt]
$[[170,16,7]]_2$ & 85
  & \suppS{0,1,5,7,8}
  & \suppS{0,1,6,7,8,10,12,13,45,46,50,52,53}
  & \suppS{0,1,6,7,8,10,12,13,22,23,27,29,30,68,69,73,75,76}
  & 31 & 4.61 \\
\addlinespace[2pt]
$[[170,18,6]]_2$ & 85
  & \suppS{0,1,3,9}
  & \suppS{0,1,3,9,39,40,42,46,47,48,49,55}
  & \suppS{0,1,3,9,31,32,34,40,54,55,57,63}
  & 24 & 3.81 \\
\addlinespace[2pt]
$[[170,10,8]]_2$ & 85
  & \suppS{0,5}
  & \suppS{0,1,3,5,8,78,81,83}
  & \suppS{0,1,10,76}
  & 12 & 3.76 \\
\bottomrule
\end{tabular}

\vspace{2pt}
{\footnotesize The code $[[66,20,7]]_2$ reaches $kd^2/n=14.85$, outperforming the bivariate bicycle code $[[144,12,12]]_2$ ($kd^2/n=12$) and all trivariate bicycle codes of \cite{ITB} ($kd^2/n\le 10$) at less than half the block length.}
\end{table*}

\begin{table*}[t]
\centering
\caption{Codes selected for decoding simulations: stabilizer row weight $w\le 20$ and distance $d\ge 6$.}
\label{tab:newdecode}
\renewcommand{\arraystretch}{1.25}
\footnotesize
\setlength{\tabcolsep}{3.5pt}
\newcommand{\suppS}[1]{$\{\suppSlist#1,\suppSend\}$}
\def\suppSlist#1,#2\suppSend{#1\ifx\relax#2\relax\else,\allowbreak\ \suppSlist#2\suppSend\fi}
\begin{tabular}{@{}p{1.75cm}p{0.55cm}p{2.4cm}p{2.9cm}p{2.9cm}cc@{}}
\toprule
$[[n,k,d]]_2$ & $l$ & $\supp g$ & $\supp a$ & $\supp b$ & $w$ & $kd^2/n$ \\
\midrule
$[[42,6,7]]_2$ & 21
  & \suppS{0,1,2}
  & \suppS{0,1,2,3,4,5,14,17}
  & \suppS{0,2,11,12,13,19}
  & 14 & 7.00 \\
\addlinespace[2pt]
$[[54,6,8]]_2$ & 27
  & \suppS{0,3}
  & \suppS{0,1,3,4,10,13,20,23}
  & \suppS{0,2,3,6,9,19,22,26}
  & 16 & 7.11 \\
\addlinespace[2pt]
$[[66,6,8]]_2$ & 33
  & \suppS{0,3}
  & \suppS{0,3,5,8,14,17,23,26}
  & \suppS{1,2,3,4,5,30}
  & 14 & 5.82 \\
\addlinespace[2pt]
$[[78,6,8]]_2$ & 39
  & \suppS{0,3}
  & \suppS{1,2,3,36,37,38}
  & \suppS{3,15,18,21,24,36}
  & 12 & 4.92 \\
\addlinespace[2pt]
$[[90,4,10]]_2$ & 45
  & \suppS{0,1,2}
  & \suppS{1,3,4,43,44}
  & \suppS{0,1,2,7,8,9,38,39,40}
  & 14 & 4.44 \\
\addlinespace[2pt]
$[[90,2,10]]_2$ & 45
  & \suppS{0,1}
  & \suppS{1,27,29,44}
  & \suppS{0,1,4,5,23,24,27,28}
  & 12 & 2.22 \\
\addlinespace[2pt]
$[[90,10,8]]_2$ & 45
  & \suppS{0,3,4}
  & \suppS{0,4,6,7,20,23,24,28,31,32}
  & \suppS{0,4,6,7,19,22,23,29,32,33}
  & 20 & 7.11 \\
\addlinespace[2pt]
$[[54,16,6]]_2$ & 27
  & \suppS{0,3,6}
  & \suppS{0,1,2,3,4,5,6,8,25}
  & \suppS{0,1,3,5,6,8,11,22,25}
  & 18 & 10.67 \\
\addlinespace[2pt]
$[[170,10,8]]_2$ & 85
  & \suppS{0,5}
  & \suppS{0,1,3,5,8,78,81,83}
  & \suppS{0,1,10,76}
  & 12 & 3.76 \\
\addlinespace[2pt]
$[[170,10,6]]_2$ & 85
  & \suppS{0,5}
  & \suppS{0,5,11,16,74,79}
  & \suppS{0,5,27,32,58,63}
  & 12 & 2.12 \\
\addlinespace[2pt]
$[[170,2,6]]_2$ & 85
  & \suppS{0,1}
  & \suppS{0,1,41,42,44,45}
  & \suppS{2,84}
  & 8 & 0.42 \\
\addlinespace[2pt]
$[[170,10,7]]_2$ & 85
  & \suppS{0,5}
  & \suppS{0,1,5,72,76,77}
  & \suppS{0,5,36,41,72,77}
  & 12 & 2.88 \\
\bottomrule
\end{tabular}

\vspace{2pt}
{\footnotesize Among these, $[[90,2,10]]_2$ ($w=12$) and $[[78,6,8]]_2$ ($w=12$) have the smallest stabilizer weights; $[[170,2,6]]_2$ has $w=8$, the smallest weight found at length $n=170$.}
\end{table*}

\begin{table*}[t]
\centering
\caption{The $k=2$ family: distance $d\ge 8$ and stabilizer row weight $w\le 16$ at every length $n=2l$, $l=23,25,27,33,35,39,45$.}
\label{tab:newk2}
\renewcommand{\arraystretch}{1.25}
\footnotesize
\setlength{\tabcolsep}{3.5pt}
\newcommand{\suppS}[1]{$\{\suppSlist#1,\suppSend\}$}
\def\suppSlist#1,#2\suppSend{#1\ifx\relax#2\relax\else,\allowbreak\ \suppSlist#2\suppSend\fi}
\begin{tabular}{@{}p{1.75cm}p{0.55cm}p{2.4cm}p{2.9cm}p{2.9cm}cc@{}}
\toprule
$[[n,k,d]]_2$ & $l$ & $\supp g$ & $\supp a$ & $\supp b$ & $w$ & $kd^2/n$ \\
\midrule
$[[46,2,8]]_2$ & 23
  & \suppS{0,1}
  & \suppS{0,1,12,13,17,19}
  & \suppS{0,1,4,5,8,9,12,13}
  & 14 & 2.78 \\
\addlinespace[2pt]
$[[50,2,9]]_2$ & 25
  & \suppS{0,1}
  & \suppS{0,1,10,11,14,15,21,22}
  & \suppS{0,1,10,11,16,17,19,20}
  & 16 & 3.24 \\
\addlinespace[2pt]
$[[54,2,9]]_2$ & 27
  & \suppS{0,1}
  & \suppS{0,1,2,3,13,14,16,17}
  & \suppS{0,1,2,3,10,11,19,20}
  & 16 & 3.00 \\
\addlinespace[2pt]
$[[66,2,9]]_2$ & 33
  & \suppS{0,1}
  & \suppS{0,1,4,5,21,22,25,26}
  & \suppS{0,1,5,6,20,21,25,26}
  & 16 & 2.45 \\
\addlinespace[2pt]
$[[70,2,10]]_2$ & 35
  & \suppS{0,1}
  & \suppS{0,1,3,4,11,12,14,15}
  & \suppS{0,1,14,15,18,19,31,32}
  & 16 & 2.86 \\
\addlinespace[2pt]
$[[78,2,9]]_2$ & 39
  & \suppS{0,1}
  & \suppS{0,1,16,18,33,34}
  & \suppS{0,1,14,15,19,20,33,34}
  & 14 & 2.08 \\
\addlinespace[2pt]
$[[90,2,10]]_2$ & 45
  & \suppS{0,1}
  & \suppS{1,27,29,44}
  & \suppS{0,1,4,5,23,24,27,28}
  & 12 & 2.22 \\
\bottomrule
\end{tabular}

\vspace{2pt}
{\footnotesize All codes share $\deg g=1$ ($k=2$); the distance grows roughly with the length $n$ while the stabilizer weight stays at most $16$.}
\end{table*}

\begin{remark}\label{rem:largek}
The $[[90,18,6]]_2$ and $[[90,20,6]]_2$ codes have dimension well beyond
the $k=8$ of the BB code $[[90,8,10]]_2$ at the same length, and the
$[[66,20,7]]_2$ code shows the same behavior at $n=66$: the large-$k$,
moderate-$d$ region is under-sampled by group-theoretic searches that
restrict to symmetric or regular weights. The polynomial search reaches it
naturally because $k$ is set directly by $\deg g$ (Theorem~\ref{thm:window})
while the distance is protected by the lift weights
(Lemma~\ref{lem:weightcap}).
\end{remark}

\begin{remark}
Within the $l=33$, $k=20$ family, all found codes share the same divisor
$g$; the distance is therefore entirely a function of the lifts
$(u,v)$. We observe that the maximal distance $d=7$ is attained by lifts of
weight $(11,12)$ whose products with $g$ cancel extensively, down to
$(\wt(a),\wt(b))=(7,12)$, while the sparsest lifts cap at $d=6$
(Table~\ref{tab:codes}). This suggests that the relevant
quantity is the reduced weight profile of $(gu,gv)$, not the nominal
lift weights; cf.\ Lemma~\ref{lem:weightcap}.
\end{remark}

The BP-OSD decoding performance of the selected low-weight codes is
reported in Section~\ref{sec:decode}.

\section{Decoding Performance}\label{sec:decode}
We benchmark the finite-length error-correction performance of the
low-weight codes of Table~\ref{tab:newdecode} under the standard
code-capacity noise model: each physical qubit independently
suffers a depolarizing error with probability $p$, and syndrome
measurements are assumed perfect. This is the benchmark setting of
\cite{BB24,ITB} for comparing code families without circuit-level
implementation details. The decoder is belief propagation with ordered
statistics decoding (BP-OSD) \cite{PK21}, the decoder class used for the
trivariate bicycle codes of \cite{ITB}; we use the sum-product update
with up to $50$ iterations followed by an OSD combination-sweep of order
$1$. Because the codes are CSS, $X$-type and $Z$-type errors are decoded
independently: the $X$-syndrome $s_X=H_Z e_X$ is decoded against the
check matrix $H_Z$ and likewise for $Z$, with single-sided depolarizing
rate $q=2p/3$. A trial is declared a failure if the residual error of
either sector lies outside the corresponding stabilizer row space, i.e.\
if the decoder output differs from the actual error by a nontrivial
logical operator. For reference we simulate, with the same decoder
configuration, the two weight-$6$ bivariate bicycle (BB) codes of
\cite{BB24}: $[[72,12,6]]$ and $[[144,12,12]]$, on $6\times6$ and
$12\times6$ tori with $A=x^3+y+y^2$ and $B=y^3+x+x^2$, respectively.

\subsection{Logical error rates}
Figures~\ref{fig:decode_long} and~\ref{fig:decode_short} show the
per-shot logical error rate $p_L$ as a function of the physical error
rate $p$, obtained by Monte Carlo simulation ($300$--$3000$ trials per
point, with up to $8000$ trials at the lowest error rates). The codes
are grouped by block length: Figure~\ref{fig:decode_long} compares the
seven codes with $n\ge 90$ against the distance-$12$ reference
$[[144,12,12]]$; Figure~\ref{fig:decode_short} compares the five codes
with $n\le 78$ against $[[72,12,6]]$. For each code of distance $d$, the
sub-threshold data is fit to the heuristic curve
\begin{equation}\label{eq:pLfit}
\tilde p_L(p)=p^{d/2}\, e^{\alpha+\beta p+\gamma p^2},
\end{equation}
following \cite{BB24,ITB}; the fit parameters are listed in
Table~\ref{tab:fit}. All codes exhibit a clear waterfall region. Among
the $n\ge 90$ codes, the steepest suppression is attained by the
distance-$10$ codes $[[90,2,10]]_2$ and $[[90,4,10]]_2$: at $p=10^{-3}$
their extrapolated logical error rates drop to $p_L\approx
2.2\times10^{-7}$, roughly four orders of magnitude below the physical
error rate, consistent with the $p^{d/2}$ scaling of \eqref{eq:pLfit}.
Even the shortest code, $[[42,6,7]]_2$, reaches $p_L\approx
7.8\times10^{-6}$ at $p=10^{-3}$, already more than two orders of
magnitude of suppression with only $42$ physical qubits. The long code
$[[170,10,8]]_2$ ($w=12$) shows no logical failure at $p\le 2\%$ within
the sampled statistics.

As expected from their low stabilizer weight $w=6$, the BB reference
codes exhibit the steepest sub-threshold slopes: at $p=10^{-3}$,
$[[144,12,12]]$ reaches $p_L\approx 4\times10^{-10}$, about three orders
of magnitude below the best $n\le 90$ code of
Table~\ref{tab:newdecode}, and $[[72,12,6]]$ reaches $p_L\approx
6.7\times10^{-7}$, on par with the shortest codes of
Figure~\ref{fig:decode_short}. All curves in
Figures~\ref{fig:decode_long} and~\ref{fig:decode_short} were obtained
with the same BP-OSD configuration, so the comparison is decoder-fair.

\subsection{Pseudo-thresholds}
Following \cite{BB24}, the pseudo-threshold $p_0$ of each code is defined
as the solution of the break-even equation
\begin{equation}\label{eq:break}
p_L(p_0)=1-(1-p_0)^{k},
\end{equation}
the probability that at least one of the $k$ unencoded qubits suffers an
error. Table~\ref{tab:pseudo} reports $p_0$ together with the
extrapolated logical error rates at $p=10^{-3}$ and $p=10^{-2}$ for the
codes of Table~\ref{tab:newdecode} and for the two BB reference codes.
Among the new codes, $p_0$ ranges from $3.1\%$ ($[[54,6,8]]_2$) to
$11.9\%$ ($[[54,16,6]]_2$); the BB reference $[[144,12,12]]$ attains
$p_0\approx 12.7\%$, while the measured $p_L$ of $[[72,12,6]]$ remains
below the break-even value at every sampled error rate ($p\le 0.13$); the
$[[170,10,8]]_2$ code likewise stays below break-even throughout the sampled
range (no logical failure was observed at $p\le2\%$), so no $p_0$ is
reported for it either.
The higher-$k$ codes trade pseudo-threshold for encoding rate, as
expected from \eqref{eq:break}.

These results confirm that the stabilizer row weight $w$ is the primary
driver of decoding performance, in agreement with the observation of
\cite{ATB} that doubling the check weight from $6$ to $8$ roughly halves
the circuit-level threshold. The codes of Table~\ref{tab:newdecode},
with $8\le w\le 20$, are therefore the natural candidates for
circuit-level fault-tolerance studies; in particular, the smallest
weights $w=8$ ($[[170,2,6]]_2$) and $w=12$
($[[90,2,10]]_2$, $[[78,6,8]]_2$, $[[170,10,8]]_2$) are expected to
admit the deepest sub-threshold suppression per round.

\begin{table}[t]
\centering
\caption{Pseudo-thresholds and extrapolated logical error rates under
code-capacity depolarizing noise (BP-OSD, $50$ iterations, OSD-CS order
$1$). $p_0$ solves $p_L(p_0)=1-(1-p_0)^k$; $p_L$ is extrapolated from
Table~\ref{tab:fit}. The last two rows are the bivariate bicycle
reference codes of \cite{BB24}, simulated with the same decoder.}
\label{tab:pseudo}
\renewcommand{\arraystretch}{1.2}
\begin{tabular}{@{}lccccc@{}}
\toprule
$[[n,k,d]]_2$ & $w$ & $p_0$ & $p_L(10^{-3})$ & $p_L(10^{-2})$ \\
\midrule
$[[42,6,7]]_2$ & 14 & 0.1041 & $7.8\times10^{-6}$ & $1.1\times10^{-2}$ \\
$[[54,6,8]]_2$ & 16 & 0.0306 & $3.2\times10^{-6}$ & $1.3\times10^{-2}$ \\
$[[66,6,8]]_2$ & 14 & 0.1098 & $2.8\times10^{-6}$ & $9.8\times10^{-3}$ \\
$[[78,6,8]]_2$ & 12 & 0.1124 & $8.9\times10^{-7}$ & $3.6\times10^{-3}$ \\
$[[90,4,10]]_2$ & 14 & 0.1041 & $2.2\times10^{-7}$ & $5.5\times10^{-3}$ \\
$[[90,2,10]]_2$ & 12 & 0.0939 & $2.2\times10^{-7}$ & $5.1\times10^{-3}$ \\
$[[90,10,8]]_2$ & 20 & 0.0988 & $2.9\times10^{-6}$ & $1.2\times10^{-2}$ \\
$[[54,16,6]]_2$ & 18 & 0.1187 & $6.9\times10^{-5}$ & $3.5\times10^{-2}$ \\
$[[170,10,8]]_2$ & 12 & -- & $6.8\times10^{-9}$ & $8.1\times10^{-5}$ \\
$[[170,10,6]]_2$ & 12 & 0.1081 & $9.3\times10^{-6}$ & $6.0\times10^{-3}$ \\
$[[170,2,6]]_2$ & 8 & 0.0346 & $5.7\times10^{-6}$ & $4.0\times10^{-3}$ \\
$[[170,10,7]]_2$ & 12 & 0.1103 & $7.8\times10^{-6}$ & $1.2\times10^{-2}$ \\
\midrule
\multicolumn{5}{@{}l}{\footnotesize BB reference codes of \cite{BB24}:}\\
$[[72,12,6]]_2$ & 6 & -- & $6.7\times10^{-7}$ & $7.2\times10^{-4}$ \\
$[[144,12,12]]_2$ & 6 & 0.1273 & $4.0\times10^{-10}$ & $1.0\times10^{-4}$ \\
\bottomrule
\end{tabular}
\end{table}
\begin{table}[t]
\centering
\caption{Fit parameters of $\tilde p_L(p)=p^{d/2}\,e^{\alpha+\beta p+\gamma p^2}$
for the curves of Figures~\ref{fig:decode_long} and~\ref{fig:decode_short}
(sub-threshold data $p\ge 0.005$).}
\label{tab:fit}
\renewcommand{\arraystretch}{1.2}
\begin{tabular}{@{}lccccc@{}}
\toprule
$[[n,k,d]]_2$ & $d$ & $\alpha$ & $\beta$ & $\gamma$ \\
\midrule
$[[42,6,7]]_2$ & 7 & 12.50 & -88 & 359 \\
$[[54,6,8]]_2$ & 8 & 15.10 & -109 & 427 \\
$[[66,6,8]]_2$ & 8 & 14.96 & -122 & 548 \\
$[[78,6,8]]_2$ & 8 & 13.80 & -105 & 481 \\
$[[90,4,10]]_2$ & 10 & 19.38 & -162 & 713 \\
$[[90,2,10]]_2$ & 10 & 19.37 & -170 & 762 \\
$[[90,10,8]]_2$ & 8 & 14.96 & -97 & 355 \\
$[[54,16,6]]_2$ & 6 & 11.21 & -79 & 312 \\
$[[170,10,8]]_2$ & 8 & 8.81 & 21 & -205 \\
$[[170,10,6]]_2$ & 6 & 9.19 & -51 & 227 \\
$[[170,2,6]]_2$ & 6 & 8.69 & -43 & 153 \\
$[[170,10,7]]_2$ & 7 & 12.50 & -87 & 367 \\
\midrule
\multicolumn{5}{@{}l}{\footnotesize BB reference codes of \cite{BB24}:}\\
$[[72,12,6]]_2$ & 6 & 6.50 & 8 & -104 \\
$[[144,12,12]]_2$ & 12 & 19.97 & -162 & 790 \\
\bottomrule
\end{tabular}
\end{table}
\begin{figure*}[t]
\centering
\begin{tikzpicture}
\begin{loglogaxis}[width=12.5cm,height=8.2cm,
  xlabel={physical error rate $p$}, ylabel={logical error rate $p_L$},
  xmin=0.001,xmax=0.2,ymin=1e-8,ymax=1.4,
  grid=both, grid style={gray!18},
  tick label style={font=\small}, label style={font=\small},
  legend style={font=\footnotesize, at={(1.03,0.5)}, anchor=west, draw=gray!40},
  legend cell align=left, thick, mark size=2.4pt]
  \addplot[red!75!black,mark=*,mark options={fill=white}] coordinates {(0.005,0.001) (0.01,0.006666666666666667) (0.02,0.01125) (0.03,0.032) (0.05,0.1175) (0.07,0.2075) (0.1,0.3775) (0.13,0.6)};
  \addlegendentry{[[90,2,10]]}
  \addplot[red!75!black,dashed,mark=none] coordinates { (0.00150,1.5249e-06) (0.00192,4.8609e-06) (0.00251,1.7016e-05) (0.00321,5.1793e-05) (0.00421,1.6991e-04) (0.00539,4.7977e-04) (0.00706,1.4175e-03) (0.00903,3.5515e-03) (0.01183,8.9079e-03) (0.01513,1.8590e-02) (0.01983,3.6606e-02) (0.02536,5.9087e-02) (0.03323,8.4977e-02) (0.04249,1.0246e-01) (0.05569,1.1248e-01) (0.07121,1.2285e-01) (0.09333,1.7624e-01) (0.11934,4.8754e-01) (0.15640,1.0000e+00) (0.20000,1.0000e+00) };
  \addplot[blue!70!black,mark=square*,mark options={fill=white}] coordinates {(0.005,0.0016666666666666668) (0.01,0.004) (0.02,0.015) (0.03,0.04) (0.05,0.1425) (0.07,0.3375) (0.1,0.5675) (0.13,0.7775)};
  \addlegendentry{[[90,4,10]]}
  \addplot[blue!70!black,dashed,mark=none] coordinates { (0.00150,1.5515e-06) (0.00192,4.9629e-06) (0.00251,1.7459e-05) (0.00321,5.3455e-05) (0.00421,1.7681e-04) (0.00539,5.0404e-04) (0.00706,1.5092e-03) (0.00903,3.8400e-03) (0.01183,9.8381e-03) (0.01513,2.1030e-02) (0.01983,4.2773e-02) (0.02536,7.1523e-02) (0.03323,1.0761e-01) (0.04249,1.3578e-01) (0.05569,1.5671e-01) (0.07121,1.7764e-01) (0.09333,2.5797e-01) (0.11934,6.8099e-01) (0.15640,1.0000e+00) (0.20000,1.0000e+00) };
  \addplot[green!60!black,mark=triangle*,mark options={fill=white}] coordinates {(0.001,0.0003333333333333333) (0.002,0.0016666666666666668) (0.005,0.0023333333333333335) (0.01,0.009333333333333334) (0.02,0.06) (0.03,0.142) (0.05,0.36) (0.07,0.5725) (0.1,0.875) (0.13,0.97)};
  \addlegendentry{[[90,10,8]]}
  \addplot[green!60!black,dashed,mark=none] coordinates { (0.00150,1.3789e-05) (0.00192,3.5417e-05) (0.00251,9.8694e-05) (0.00321,2.4686e-04) (0.00421,6.6262e-04) (0.00539,1.5868e-03) (0.00706,4.0080e-03) (0.00903,8.9500e-03) (0.01183,2.0525e-02) (0.01513,4.1100e-02) (0.01983,8.1391e-02) (0.02536,1.3894e-01) (0.03323,2.2453e-01) (0.04249,3.1292e-01) (0.05569,4.0537e-01) (0.07121,4.8232e-01) (0.09333,6.0341e-01) (0.11934,9.1765e-01) (0.15640,1.0000e+00) (0.20000,1.0000e+00) };
  \addplot[orange!80!black,mark=diamond*,mark options={fill=white}] coordinates {(0.03,0.0075) (0.05,0.09333333333333334) (0.07,0.22666666666666666) (0.1,0.6633333333333333) (0.13,0.9566666666666667)};
  \addlegendentry{[[170,10,8]]}
  \addplot[orange!80!black,dashed,mark=none] coordinates { (0.00150,3.4829e-08) (0.00192,9.3921e-08) (0.00251,2.8043e-07) (0.00321,7.6033e-07) (0.00421,2.2873e-06) (0.00539,6.2546e-06) (0.00706,1.9033e-05) (0.00903,5.2702e-05) (0.01183,1.6298e-04) (0.01513,4.5871e-04) (0.01983,1.4447e-03) (0.02536,4.1234e-03) (0.03323,1.3064e-02) (0.04249,3.6772e-02) (0.05569,1.0983e-01) (0.07121,2.7188e-01) (0.09333,6.0582e-01) (0.11934,9.0048e-01) (0.15640,7.1213e-01) (0.20000,1.9662e-01) };
  \addplot[violet!70!black,mark=pentagon*,mark options={fill=white}] coordinates {(0.005,0.003) (0.01,0.004) (0.02,0.011666666666666667) (0.03,0.055) (0.05,0.17333333333333334) (0.07,0.38666666666666666) (0.1,0.7866666666666666) (0.13,0.96)};
  \addlegendentry{[[170,10,6]]}
  \addplot[violet!70!black,dashed,mark=none] coordinates { (0.00150,3.0497e-05) (0.00192,6.2434e-05) (0.00251,1.3639e-04) (0.00321,2.7537e-04) (0.00421,5.8989e-04) (0.00539,1.1643e-03) (0.00706,2.4166e-03) (0.00903,4.6007e-03) (0.01183,9.0885e-03) (0.01513,1.6372e-02) (0.01983,3.0059e-02) (0.02536,5.0091e-02) (0.03323,8.3583e-02) (0.04249,1.2740e-01) (0.05569,1.9549e-01) (0.07121,2.8815e-01) (0.09333,4.7602e-01) (0.11934,9.1886e-01) (0.15640,1.0000e+00) (0.20000,1.0000e+00) };
  \addplot[cyan!70!black,mark=+,mark options={fill=white}] coordinates {(0.005,0.0005) (0.01,0.007) (0.02,0.016666666666666666) (0.03,0.0425) (0.05,0.11666666666666667) (0.07,0.24) (0.1,0.4866666666666667) (0.13,0.6066666666666667)};
  \addlegendentry{[[170,2,6]]}
  \addplot[cyan!70!black,dashed,mark=none] coordinates { (0.00150,1.8862e-05) (0.00192,3.8754e-05) (0.00251,8.5088e-05) (0.00321,1.7280e-04) (0.00421,3.7323e-04) (0.00539,7.4368e-04) (0.00706,1.5640e-03) (0.00903,3.0224e-03) (0.01183,6.0932e-03) (0.01513,1.1225e-02) (0.01983,2.1218e-02) (0.02536,3.6439e-02) (0.03323,6.2973e-02) (0.04249,9.8855e-02) (0.05569,1.5476e-01) (0.07121,2.2598e-01) (0.09333,3.4633e-01) (0.11934,5.5783e-01) (0.15640,1.0000e+00) (0.20000,1.0000e+00) };
  \addplot[magenta!70!black,mark=x,mark options={fill=white}] coordinates {(0.002,0.0005) (0.005,0.0035) (0.01,0.01) (0.02,0.03) (0.03,0.0925) (0.05,0.24666666666666667) (0.07,0.4166666666666667) (0.1,0.78) (0.13,0.9666666666666667)};
  \addlegendentry{[[170,10,7]]}
  \addplot[magenta!70!black,dashed,mark=none] coordinates { (0.00150,3.0861e-05) (0.00192,7.0391e-05) (0.00251,1.7237e-04) (0.00321,3.8395e-04) (0.00421,9.0931e-04) (0.00539,1.9486e-03) (0.00706,4.3725e-03) (0.00903,8.8088e-03) (0.01183,1.8158e-02) (0.01513,3.3263e-02) (0.01983,6.0415e-02) (0.02536,9.6648e-02) (0.03323,1.4838e-01) (0.04249,2.0224e-01) (0.05569,2.6509e-01) (0.07121,3.3338e-01) (0.09333,4.7443e-01) (0.11934,8.8374e-01) (0.15640,1.0000e+00) (0.20000,1.0000e+00) };
  \addplot[black,mark=diamond*] coordinates {(0.01,0.0006666666666666666) (0.012,0.00025) (0.015,0.00025) (0.02,0.001125) (0.025,0.00225) (0.03,0.0025) (0.035,0.004833333333333334) (0.04,0.00775) (0.045,0.014666666666666666) (0.05,0.0275) (0.07,0.035) (0.1,0.3075) (0.13,0.5825)};
  \addlegendentry{[[144,12,12]]}
  \addplot[black,dashed,mark=none] coordinates { (0.00150,4.2168e-09) (0.00192,1.7249e-08) (0.00251,7.9527e-08) (0.00321,3.1138e-07) (0.00421,1.3501e-06) (0.00539,4.9241e-06) (0.00706,1.9344e-05) (0.00903,6.3049e-05) (0.01183,2.1246e-04) (0.01513,5.8412e-04) (0.01983,1.5744e-03) (0.02536,3.4254e-03) (0.03323,6.9796e-03) (0.04249,1.1847e-02) (0.05569,1.9705e-02) (0.07121,3.3045e-02) (0.09333,8.2504e-02) (0.11934,4.2173e-01) (0.15640,1.0000e+00) (0.20000,1.0000e+00) };
\end{loglogaxis}
\end{tikzpicture}
\caption{Decoding performance: long codes vs BB $[[144,12,12]]$. Markers: Monte Carlo (BP-OSD, $50$ iterations, OSD-CS order $1$); dashed lines: the fit $\tilde p_L(p)=p^{d/2}e^{\alpha+\beta p+\gamma p^2}$ on sub-threshold data, drawn over the full axis range for illustration.}
\label{fig:decode_long}
\end{figure*}
\begin{figure*}[t]
\centering
\begin{tikzpicture}
\begin{loglogaxis}[width=12.5cm,height=8.2cm,
  xlabel={physical error rate $p$}, ylabel={logical error rate $p_L$},
  xmin=0.001,xmax=0.2,ymin=1e-8,ymax=1.4,
  grid=both, grid style={gray!18},
  tick label style={font=\small}, label style={font=\small},
  legend style={font=\footnotesize, at={(1.03,0.5)}, anchor=west, draw=gray!40},
  legend cell align=left, thick, mark size=2.4pt]
  \addplot[red!75!black,mark=*,mark options={fill=white}] coordinates {(0.002,0.0003333333333333333) (0.005,0.0026666666666666666) (0.01,0.008) (0.02,0.06) (0.03,0.084) (0.05,0.205) (0.07,0.34) (0.1,0.595) (0.13,0.7825)};
  \addlegendentry{[[42,6,7]]}
  \addplot[red!75!black,dashed,mark=none] coordinates { (0.00150,3.0734e-05) (0.00192,7.0069e-05) (0.00251,1.7146e-04) (0.00321,3.8163e-04) (0.00421,9.0277e-04) (0.00539,1.9319e-03) (0.00706,4.3264e-03) (0.00903,8.6948e-03) (0.01183,1.7859e-02) (0.01513,3.2574e-02) (0.01983,5.8779e-02) (0.02536,9.3271e-02) (0.03323,1.4143e-01) (0.04249,1.8972e-01) (0.05569,2.4251e-01) (0.07121,2.9500e-01) (0.09333,3.9767e-01) (0.11934,6.8789e-01) (0.15640,1.0000e+00) (0.20000,1.0000e+00) };
  \addplot[blue!70!black,mark=square*,mark options={fill=white}] coordinates {(0.002,0.0003333333333333333) (0.005,0.0016666666666666668) (0.01,0.013333333333333334) (0.02,0.07625) (0.03,0.11) (0.05,0.2575) (0.07,0.4075) (0.1,0.6575) (0.13,0.865)};
  \addlegendentry{[[54,6,8]]}
  \addplot[blue!70!black,dashed,mark=none] coordinates { (0.00150,1.5574e-05) (0.00192,3.9814e-05) (0.00251,1.1022e-04) (0.00321,2.7354e-04) (0.00421,7.2629e-04) (0.00539,1.7174e-03) (0.00706,4.2618e-03) (0.00903,9.3258e-03) (0.01183,2.0797e-02) (0.01513,4.0358e-02) (0.01983,7.6632e-02) (0.02536,1.2501e-01) (0.03323,1.9083e-01) (0.04249,2.5157e-01) (0.05569,3.0758e-01) (0.07121,3.5308e-01) (0.09333,4.4570e-01) (0.11934,7.4980e-01) (0.15640,1.0000e+00) (0.20000,1.0000e+00) };
  \addplot[green!60!black,mark=triangle*,mark options={fill=white}] coordinates {(0.005,0.0036666666666666666) (0.01,0.006) (0.02,0.03375) (0.03,0.046) (0.05,0.1725) (0.07,0.33) (0.1,0.615) (0.13,0.8225)};
  \addlegendentry{[[66,6,8]]}
  \addplot[green!60!black,dashed,mark=none] coordinates { (0.00150,1.3258e-05) (0.00192,3.3709e-05) (0.00251,9.2594e-05) (0.00321,2.2775e-04) (0.00421,5.9711e-04) (0.00539,1.3916e-03) (0.00706,3.3844e-03) (0.00903,7.2387e-03) (0.01183,1.5652e-02) (0.01513,2.9360e-02) (0.01983,5.3362e-02) (0.02536,8.3249e-02) (0.03323,1.2078e-01) (0.04249,1.5289e-01) (0.05569,1.8287e-01) (0.07121,2.1590e-01) (0.09333,3.1374e-01) (0.11934,7.2472e-01) (0.15640,1.0000e+00) (0.20000,1.0000e+00) };
  \addplot[orange!80!black,mark=diamond*,mark options={fill=white}] coordinates {(0.005,0.0006666666666666666) (0.01,0.0033333333333333335) (0.02,0.02) (0.03,0.026) (0.05,0.0975) (0.07,0.2075) (0.1,0.54) (0.13,0.825)};
  \addlegendentry{[[78,6,8]]}
  \addplot[orange!80!black,dashed,mark=none] coordinates { (0.00150,4.2583e-06) (0.00192,1.0904e-05) (0.00251,3.0255e-05) (0.00321,7.5298e-05) (0.00421,2.0073e-04) (0.00539,4.7701e-04) (0.00706,1.1923e-03) (0.00903,2.6323e-03) (0.01183,5.9494e-03) (0.01513,1.1741e-02) (0.01983,2.2883e-02) (0.02536,3.8608e-02) (0.03323,6.2185e-02) (0.04249,8.8072e-02) (0.05569,1.2119e-01) (0.07121,1.6375e-01) (0.09333,2.7275e-01) (0.11934,6.8020e-01) (0.15640,1.0000e+00) (0.20000,1.0000e+00) };
  \addplot[violet!70!black,mark=pentagon*,mark options={fill=white}] coordinates {(0.002,0.0016666666666666668) (0.005,0.009333333333333334) (0.01,0.034) (0.02,0.10875) (0.03,0.176) (0.05,0.385) (0.07,0.5425) (0.1,0.84) (0.13,0.9375)};
  \addlegendentry{[[54,16,6]]}
  \addplot[violet!70!black,dashed,mark=none] coordinates { (0.00150,2.2261e-04) (0.00192,4.5058e-04) (0.00251,9.6853e-04) (0.00321,1.9187e-03) (0.00421,4.0014e-03) (0.00539,7.6539e-03) (0.00706,1.5198e-02) (0.00903,2.7482e-02) (0.01183,5.0508e-02) (0.01513,8.3722e-02) (0.01983,1.3696e-01) (0.02536,2.0024e-01) (0.03323,2.7979e-01) (0.04249,3.5081e-01) (0.05569,4.1799e-01) (0.07121,4.7525e-01) (0.09333,5.8224e-01) (0.11934,8.7963e-01) (0.15640,1.0000e+00) (0.20000,1.0000e+00) };
  \addplot[black,mark=diamond*] coordinates {(0.01,0.0006666666666666666) (0.012,0.0014) (0.015,0.0026) (0.02,0.0064) (0.025,0.01) (0.03,0.02) (0.035,0.033666666666666664) (0.05,0.105) (0.07,0.245) (0.1,0.52) (0.13,0.73)};
  \addlegendentry{[[72,12,6]]}
  \addplot[black,dashed,mark=none] coordinates { (0.00150,2.2799e-06) (0.00192,4.7827e-06) (0.00251,1.0815e-05) (0.00321,2.2733e-05) (0.00421,5.1549e-05) (0.00539,1.0869e-04) (0.00706,2.4747e-04) (0.00903,5.2404e-04) (0.01183,1.1994e-03) (0.01513,2.5523e-03) (0.01983,5.8674e-03) (0.02536,1.2502e-02) (0.03323,2.8595e-02) (0.04249,5.9925e-02) (0.05569,1.3122e-01) (0.07121,2.5359e-01) (0.09333,4.6796e-01) (0.11934,6.8003e-01) (0.15640,7.1493e-01) (0.20000,4.2365e-01) };
\end{loglogaxis}
\end{tikzpicture}
\caption{Decoding performance: short codes vs BB $[[72,12,6]]$. Markers: Monte Carlo (BP-OSD, $50$ iterations, OSD-CS order $1$); dashed lines: the fit $\tilde p_L(p)=p^{d/2}e^{\alpha+\beta p+\gamma p^2}$ on sub-threshold data, drawn over the full axis range for illustration.}
\label{fig:decode_short}
\end{figure*}

\section{Beyond the Cyclic Case: Rank Degeneracy at \texorpdfstring{$n=48$}{n=48}}\label{sec:rank48}
The polynomial framework of Sections~\ref{sec:poly}--\ref{sec:method} rests
on the cyclic, regular-action specialization $G=\Z_l$, $H=\{e\}$:
transposition is then a rank-preserving ring automorphism, and the
dimension formula collapses to the gcd form $k=2\deg\gcd(a,b,x^l-1)$ of
Proposition~\ref{prop:k}. This section states the main phenomena that
appear when these consequences fail, i.e.\ which effects are captured by
the polynomial picture and which are genuinely coset-theoretic. Proofs are
only sketched; the full census data and the structural analysis will be
developed in a companion paper.

Let $G$ be a finite group, $H\le G$ a subgroup of index $m=[G{:}H]$, and
$a,b\in\F_2[G/H]$ sums of $w_a$, $w_b$ coset permutations. The symmetric
coset-2BGA CSS code has checks $H_X=[\,A\mid B\,]$,
$H_Z=[\,B^{T}\mid A^{T}\,]$ on $n=2m$ qubits, with $H_XH_Z^{T}=0$ automatic
in characteristic $2$, and dimension
\begin{equation}\label{eq:kformula48}
k=2\bigl(m-\rank H_X\bigr)+\bigl(\rank H_X-\rank H_Z\bigr),
\end{equation}
the coset generalization of Proposition~\ref{prop:k}: the first term is the
generic (polynomial) contribution, while the rank defect
$\rank H_X-\rank H_Z$ vanishes identically in the cyclic case and can be
nonzero for nontrivial coset actions. We specialize to the weight-$8$
symmetric family $G=\mathrm{SmallGroup}(72,30)$, $H\cong C_3$, $m=24$,
$n=48$, $w_a=w_b=4$, enumerated exhaustively by computer into $123$
equivalence classes under $S_{48}$ coordinate permutation and the CSS
$X/Z$ swap.

\subsection{Three codes: benchmark, surpass, boundary}
Table~\ref{tab:48codes} places our construction against the two reference
points of Aydin--Tamo--Barg \cite{ATB}.

\begin{table}[t]
\centering
\caption{The $n=48$ coset-2BGA design space: benchmark, surpass, boundary}
\label{tab:48codes}
\renewcommand{\arraystretch}{1.2}
\begin{tabular}{@{}lccc@{}}
\toprule
 & this work $\mathcal C_{48}$ & ATB $\mathcal C_{72}$ & ATB $[[48,8,6]]$ \\
\midrule
parameters $[[n,k,d]]$ & $[[48,10,6]]$ & $[[48,10,6]]$ & $[[48,8,6]]$ \\
group $G$ & SG$(48,10)$ & SG$(72,30)$ & SG$(384,512)$ \\
group order $|G|$ & $\mathbf{48}$ & $72$ & $384$ \\
subgroup $H$ & $|H|=2$ & $C_3$ & $C_{16}$ \\
index $[G{:}H]$ & $24$ & $24$ & $24$ \\
$(w_a,w_b)$ & $(4,4)$ & $(4,4)$ & $(3,3)$ \\
check weight $w$ & $8$ & $8$ & $6$ \\
bit degree & $4$ & $4$ & $3$ \\
rate $k/n$ & $20.8\%$ & $20.8\%$ & $16.7\%$ \\
$kd/n$ & $1.25$ & $1.25$ & $1.0$ \\
transpose self-dual & verified & not verified & no \\
\bottomrule
\end{tabular}
\end{table}

\begin{construction}\label{cons:48}
Take $G_1=\mathrm{SmallGroup}(48,10)$, $H_1\le G_1$ with $|H_1|=2$,
$[G_1{:}H_1]=24$, and coset-support elements $a=[1,6,26,28]$,
$b=[1,6,7,8]$. Then $\rank H_X=\rank H_Z=19$ (GF(2) row reduction), hence
$k=10$, with exact distance $d=6$, constant row weight $8$, and constant
column weight $4$.
\end{construction}

The resulting code has the same public parameters
$(n,k,d,w)=(48,10,6,8)$ as the ATB code built from
$\mathrm{SmallGroup}(72,30)$; since the group order and the coset index
are invariants of a coset-2BGA construction, the two do not arise from
the same construction data.

\begin{theorem}\label{thm:minimal}
Within the $[G{:}H]=24$, $n=48$, weight-$8$ coset-2BGA framework,
Construction~\ref{cons:48} is the minimal-order group realization of
$(48,10,6)$: $n=2[G{:}H]=48$ forces $|G|=24|H|$, and $|H|=2$ is the
smallest nontrivial choice ($|H|=1$ degenerates to the regular action,
which the exhaustive 2BGA searches of \cite{ATB} show does not contain
these parameters). The group is $1/3$ smaller than the ATB realization of
the same parameters, and $7/8$ smaller than the $384$-element group
behind their $[[48,8,6]]$ code.
\end{theorem}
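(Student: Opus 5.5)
The proof has three parts: existence at $|H|=2$, the forced relation $|G|=24|H|$, and the exclusion of $|H|=1$.

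\emph{Existence.} Verify directly that Construction~\ref{cons:48} realizes $(48,10,6)$ with check weight $8$. The orthogonality $H_XH_Z^{T}=AB+BA$ needs care: $A$ and $B$ are left-action coset permutation sums, so they do not commute automatically for a non-normal $H$. In the symmetric coset-2BGA setup we use that the $b$-action is a right action commuting with the left action on $G/H$, as in \cite{ATB}. The rank equalities $\rank H_X=\rank H_Z=19$ then give $k=2(24-19)+0=10$ by \eqref{eq:kformula48}.

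\emph{Distance.} The distance $d=6$ comes from the two-kernel computation of Lemma~\ref{lem:exactd}, carried out on the explicit $24\times48$ matrices. Two checks are needed: no vector of weight at most $5$ lies in $\ker H_Z\setminus\row H_X$ or in $\ker H_X\setminus\row H_Z$, and some weight-$6$ vector does. Row and column weights $8$ and $4$ are read off from the fact that the four coset permutations $a$ (respectively $b$) have pairwise disjoint supports at every point. This part is a finite certificate, and I would state it as computational.

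\emph{Forcing $|G|=24|H|$.} In any coset-2BGA code the qubits are indexed by two copies of $G/H$, so $n=2[G{:}H]$. With $n=48$ this gives $[G{:}H]=24$, and Lagrange's theorem yields $|G|=24|H|$. Hence the order of $G$ is minimized by minimizing $|H|$, and the only candidate smaller than $48$ is $|H|=1$, $|G|=24$.

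\emph{Excluding $|H|=1$.} With $H$ trivial the action is regular and the code is an ordinary 2BGA code over a group of order $24$. I would rely on the exhaustive enumeration of \cite{ATB} of 2BGA codes over all $15$ groups of order $24$ with weight-$(4,4)$ supports, which contains no $[[48,10,6]]$ code. This step is the main obstacle: it is not provable from the paper's algebra. A genuine argument would require rerunning that finite enumeration, up to automorphisms of each group and translations of the supports, using the rank formula and Lemma~\ref{lem:exactd}.

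For abelian groups of order $24$ a partial argument is available. For $\Z_{24}$, the analogue of Proposition~\ref{prop:k} with $k=2\deg\gcd$ requires $\deg\gcd(a,b,x^{24}-1)=5$. Since $x^{24}-1=(x^3-1)^8$ over $\F_2$, this is quite restrictive and could be combined with weight bounds like Lemma~\ref{lem:weightcap} to cap $d$. For the nonabelian groups, however, I would defer to the census.

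The final ratios are arithmetic: $48/72=2/3$, so the group is one third smaller than the ATB realization, and $48/384=1/8$, so it is $7/8$ smaller than the group behind $[[48,8,6]]$.
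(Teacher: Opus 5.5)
Your proposal is correct and follows the same route as the paper's sketched argument. Existence comes from the computational data of Construction~\ref{cons:48}, and Lagrange gives $|G|=24|H|$, so only $|H|=1$ could beat $|G|=48$. That regular case is excluded by appeal to the exhaustive 2BGA census of \cite{ATB}, the ratios follow from $48/72=2/3$ and $48/384=1/8$, and, like the paper, you rightly treat both the $|H|=1$ exclusion and the $d=6$ certificate as external finite computations rather than consequences of the algebra.
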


\subsection{The rank-degeneracy theorem}
\begin{theorem}\label{thm:rank-degeneracy}
In the weight-$8$ symmetric coset-2BGA family above, every code with
quantum distance $d=5$ satisfies
\[
\rank H_X=19,\qquad \rank H_Z=20,\qquad k=9.
\]
Consequently, no $[[48,10,5]]$ code exists in this family: distance $5$
and the generic dimension $k=10$ are mutually exclusive.
\end{theorem}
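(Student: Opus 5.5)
The statement concerns a finite, explicitly enumerated family: $123$ equivalence classes of weight-$8$ symmetric coset-2BGA codes over $G=\mathrm{SmallGroup}(72,30)$, $H\cong C_3$, $m=24$. The proof I have in mind is therefore a structured census. It combines the dimension formula \eqref{eq:kformula48} with exact distance computation (Lemma~\ref{lem:exactd}, whose two-kernel argument applies verbatim to $H_X=[A\mid B]$, $H_Z=[B^T\mid A^T]$ since nothing there used cyclicity).

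The first step is to justify the reduction to class representatives. $S_{48}$ coordinate permutations preserve the row spaces, ranks and weight distributions. The $X/Z$ swap exchanges $(\rank H_X,\rank H_Z)$ and $(d_X,d_Z)$ while preserving $k$ and $d=\min(d_X,d_Z)$. So it suffices to check one representative per class, up to this swap. Strictly, the swap would turn $(19,20)$ into $(20,19)$. I will therefore state the rank conclusion for the representative normalized so that $\rank H_X\le\rank H_Z$. Equivalently, the invariant content is $\{\rank H_X,\rank H_Z\}=\{19,20\}$ and $k=9$.

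The second step computes, for each of the $123$ representatives, both ranks by $\F_2$ row reduction of the $24\times48$ matrices and the exact distance $d=\min(d_X,d_Z)$. Each side is found by enumerating the kernel in increasing weight and testing membership in the stabilizer row space. This produces a table $(\rank H_X,\rank H_Z,k,d)$ over all classes. From it I read off that every class with $d=5$ has ranks $(19,20)$, and hence $k=2(24-19)+(19-20)$ by \eqref{eq:kformula48}. Here I should cross-check \eqref{eq:kformula48} against the direct count $k=n-\rank H_X-\rank H_Z=48-39=9$. These agree, and both forms reduce to $k=n-\rank H_X-\rank H_Z$.

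The corollary then follows by contrapositive. A code in the family with $k=10$ must satisfy $\rank H_X+\rank H_Z=38$. The census shows that no class with this rank sum has $d=5$. So $[[48,10,5]]$ does not occur, and by the first step the conclusion transfers from representatives to the whole family.

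The main obstacle is making the distance certification trustworthy and explanatory rather than purely computational. Exact $d$ at $n=48$ is cheap, but proving $d\ne5$ for the $k=10$ classes requires exhausting weight-$\le5$ kernel words outside the row space, which is fine computationally. For a structural explanation I would try this: look for a weight-$5$ logical operator $c\in\ker H_Z\setminus\row H_X$, and show that its existence forces a linear relation among the rows of $H_Z$ not mirrored in $H_X$. The idea is to use the $H$-coset symmetry to average $c$ over the $C_3$ orbit structure, producing a nonzero vector in $\ker H_Z^T$ whose transpose partner fails to lie in $\ker H_X^T$. This would make rank loss a consequence of $d=5$. I expect this step to be speculative; if it fails, the theorem rests on the exhaustive census alone.
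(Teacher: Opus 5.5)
Your proposal is essentially the paper's proof. Both run an exhaustive computer census of the $123$ classes, computing both check ranks and the low-weight kernel spectra, which gives $(\rank H_X,\rank H_Z)=(19,20)$ and $k=9$ for the seven $d=5$ classes versus $19+19$ and $k=10$ for all others, and both exclude $[[48,10,5]]$ by contrapositive. Your caveat that the $X/Z$ swap makes the ordered rank pair meaningful only up to orientation is a fair sharpening that the paper leaves implicit.

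Your optional structural heuristic, however, points the wrong way and could not succeed. In the paper's orientation the weight-$5$ logicals lie in $\ker H_Z$ ($d_X=5$, $d_Z\ge 6$), while the extra row dependency sits in $H_X$ ($\rank H_X=19<\rank H_Z=20$). A mechanism that derives a new relation among the rows of $H_Z$ from such a logical would therefore contradict the census data.
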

\begin{proof}[Proof (computer-assisted, exhaustive)]
The census partitions the family into $123$ equivalence classes; exactly
$7$ classes ($253$ member codes) have $d=5$ (Table~\ref{tab:d5-spectra}).
For each class both check ranks and the kernel weight spectra were
computed by exhaustive enumeration over all $\binom{48}{w}$ supports,
$w\le5$: in every case $\rank H_X=19$, $\rank H_Z=20$, hence $k=9$ by
\eqref{eq:kformula48}, and the X-side kernel contains weight-$5$ vectors
while both kernels are free of weight $\le4$. All remaining classes ---
$110$ with $d\le4$ and $6$ with $d=6$ --- have full ranks $19+19$ and
$k=10$, so rank loss is tied to distance exactly $5$, not to small or
large distance as such.
\end{proof}

\begin{table}[t]
\centering
\caption{The seven $[[48,9,5]]$ equivalence classes. $|{\rm Aut}|$ is the
automorphism group order; $N_5^X$ counts weight-$5$ X-type logicals.
All classes have $\rank H_X=19$, $\rank H_Z=20$, $d_X=5$, $d_Z\ge6$.}
\label{tab:d5-spectra}
\renewcommand{\arraystretch}{1.15}
\begin{tabular}{@{}cccccc@{}}
\toprule
class & members & $|{\rm Aut}|$ & $a$ & $b$ & $N_5^X$ \\
\midrule
6  & 80 & 48  & $[1,33,54,63]$ & $[1,5,9,10]$  & 24 \\
7  & 48 & 96  & $[1,22,53,72]$ & $[1,6,8,11]$  & 24 \\
8  & 48 & 96  & $[1,20,53,69]$ & $[1,5,7,10]$  & 24 \\
9  & 41 & 48  & $[1,35,38,62]$ & $[1,5,9,10]$  & 12 \\
10 & 16 & 192 & $[1,28,49,67]$ & $[1,6,8,11]$  & 24 \\
11 & 12 & 96  & $[1,21,46,68]$ & $[1,5,9,11]$  & 36 \\
12 & 8  & 48  & $[1,34,40,60]$ & $[1,5,7,10]$  & 12 \\
\bottomrule
\end{tabular}
\end{table}

\subsection{Structural features and outlook}
The seven classes share three features that single out the $d=5$ stratum:
every weight-$5$ X-kernel vector is a genuine logical operator (the Z
stabilizer contains no weight-$5$ element at all); $d_X=5$ while
$d_Z\ge6$ despite the transpose symmetry; and the minimum-logical counts
are quantized, $N_5^X\in\{12,24,36\}$, non-monotone in $|{\rm Aut}|$.

\begin{conjecture}\label{conj:module}
For this family, $\rank H_Z>\rank H_X$ iff the cyclic $G$-submodule
$\langle a\rangle\subseteq\F_2[G/H]$ is contained in a proper submodule
$U$ with $\dim\F_2[G/H]/U=1$, while $\langle b\rangle\not\subseteq U$.
Whenever this happens the code has $k=9$ and $d=5$; otherwise $k=10$ and
$d\in\{3,4,6\}$.
\end{conjecture}

For the present paper the message is twofold. First, the gcd dimension
formula of Proposition~\ref{prop:k} is the defect-zero case of the
general rank formula \eqref{eq:kformula48}, so the polynomial framework
knows exactly which degrees of freedom it has discarded. Second, the
module-embedding mechanism of Conjecture~\ref{conj:module} is the coset
analogue of the divisor condition $g\mid x^l-1$: both say that dimension
is controlled by where $a$ sits inside a module, and both turn this into
a pre-filter.

\section{Discussion}\label{sec:value}
\subsection{Theoretical significance}
Theorem~\ref{thm:weld} shows that the group-theoretic commuting-action
condition and the CRSS symplectic self-orthogonality are one equation read
two ways; under the cyclic specialization it becomes the polynomial
identity \eqref{eq:invol}. Proposition~\ref{prop:k} turns the quantum
dimension into a polynomial gcd, which enables pre-filtering and a
divisor-driven search that has no counterpart in the pure group
formulation. The framework is the cyclic, regular-action case of the
broader coset-based symplectic theory, and it makes precise when
group-theoretic constructions admit a polynomial (circulant) form.
Finally, the rank formula \eqref{eq:kformula48} shows that the gcd formula
of Proposition~\ref{prop:k} is the defect-zero case of a general two-term
law, and Theorem~\ref{thm:rank-degeneracy} exhibits the defect term in
action: at $n=48$ it forbids $[[48,10,5]]$ outright. The framework thus
states not only what it constructs but also what it provably cannot see.

\subsection{Practical significance}
The Magma pipeline (factor $x^l-1$, enumerate $g$, gcd pre-filter,
fingerprint dedup, two-level distance filter) is fully automated
and reproduces the short codes of Tables~\ref{tab:codes}--\ref{tab:newk2}
within seconds to minutes, while reaching
parameter combinations (e.g.\ large $k$) poorly covered by existing tables.

\subsection{Limitations and open problems}
The scope of the algebraic control established here is the univariate,
cyclic case. Extending it to the bivariate setting
$G=\Z_l\times\Z_m$ requires working in
$R_2=\F_2[x,y]/(x^l-1,y^m-1)$, where the single gcd of
Proposition~\ref{prop:k} must be replaced by a Gr\"obner-basis or
quotient-dimension computation; finding the right analogue is the natural
next step and would bring the BB codes of \cite{BB24} inside the same
pre-filtered search framework.

On the computational side, the exact distance evaluation of
Lemma~\ref{lem:exactd} is the bottleneck for large $l$. A two-level
estimator, in which a BP-OSD decoder proposes low-weight logical
candidates that are then certified exactly, would extend the reachable
lengths considerably. Finally, the performance statements in this paper
are made relative to the bicycle and bivariate/trivariate bicycle families
($kd^2/n$ comparisons, Section~\ref{sec:results}); a systematic comparison
against a comprehensive database of quantum LDPC codes, including
generalized bicycle and lifted-product families, remains future work.

Two open problems concern the coset direction of
Section~\ref{sec:rank48}. Conjecture~\ref{conj:module} predicts rank
degeneracy from a module-embedding condition; proving it, and unifying the
resulting criterion with the divisor pre-filter of
Section~\ref{sec:method}, would merge the cyclic and coset searches into a
single algebraic framework. Separately, the $[[48,10,6]]$ code of
Construction~\ref{cons:48} has weight-$8$ checks and therefore deeper
syndrome-extraction circuits than the weight-$6$ ATB $[[48,8,6]]$ code;
its circuit-level threshold behavior remains to be evaluated.

\section{Conclusion}\label{sec:concl}
We reformulated quantum bicycle LDPC codes in the symplectic polynomial
domain, showing that self-orthogonality is automatic
(Theorem~\ref{thm:weld}), the dimension is a polynomial gcd
(Proposition~\ref{prop:k}, Theorem~\ref{thm:window}), and the distance is
computed exactly via the Calderbank correspondence with the stabilizer
excluded (Lemma~\ref{lem:exactd}). The resulting divisor-driven search
recovers short codes at $n=46$, $66$, and $90$ with competitive
$kd^2/n$, including a $[[66,20,7]]_2$ code from a single degree-$10$
irreducible divisor of $x^{33}-1$ that attains $kd^2/n=14.85$, above the
bivariate bicycle code $[[144,12,12]]_2$, and reaches the large-$k$ regime
missed by group-theoretic searches. Stepping
outside the cyclic case, we showed that the same framework exposes its own
boundary: at $n=48$ the coset rank defect produces a rank-degeneracy theorem
excluding $[[48,10,5]]$ from the family, while a minimal-order $48$-element
group realizes $[[48,10,6]]$ within the coset-2BGA framework, improving on the
$72$-element construction of \cite{ATB}. A bivariate
extension, a sharp closed form for the kernel-family weight
$w_{\min}(a,b)$ of Lemma~\ref{lem:weightcap}, and a proof of the
module-embedding criterion are underway.

\section*{Acknowledgment}

The authors would like to thank Ruihu Li for the suggestions on our manuscript, which improved the manuscript significantly.
This work is supported by the National Natural Science Foundation of China under Grant No. U21A20428, Natural Science Foundation of Shaanxi under Grant No. 2025-JC-YBQN-070.


\bibliographystyle{IEEEtran}
\bibliography{refs}

\end{document}